\newtheorem{theorem}{Theorem}[section]
\newtheorem{lemma}[theorem]{Lemma}
\newtheorem{prop}[theorem]{Proposition}
\newtheorem{remark}[theorem]{Remark}
\numberwithin{equation}{section}
\newcommand{\R}{\mathbb{R}}
\newcommand{\T}{\mathbb{T}}
\renewcommand{\S}{{\mathbb S}}
\newcommand{\func}[3]{#1 : #2 \longrightarrow #3}
\newcommand{\disp}{\displaystyle}
\newcommand{\abs}[1]{\left|#1\right|}
\newcommand{\eps}{\varepsilon}
\newcommand{\norm}[1]{\left\|#1\right\|}
\renewcommand{\leq}{\leqslant}
\renewcommand{\geq}{\geqslant}
\newcommand{\pa}[1]{\left(#1\right)}
\newcommand{\cro}[1]{\left[#1\right]}
\newcommand{\br}[1]{\left\{#1\right\}}
\newcommand\restr[2]{{% we make the whole thing an ordinary symbol
  \left.\kern-\nulldelimiterspace % automatically resize the bar with \right
  #1 % the function
  %\vphantom{\big|} % pretend it's a little taller at normal size
  \right|_{ #2} % this is the delimiter
  }}
\def\signmb{\bigskip \begin{center} {\sc
Marc Briant\par\vspace{3mm}
Sorbonne Universit\'es, UPMC Univ. Paris 06/ CNRS\par
UMR 7598, Laboratoire Jacques-Louis Lions,\par
F-75005, Paris, France\par
\vspace{3mm}
e-mail:} \tt{briant.maths@gmail.com} \end{center}}
\begin{document} 

\title[Stability for the multi-species Boltzmann equation in $L^\infty$ settings]{Stability of global equilibrium for the multi-species Boltzmann equation in $L^\infty$ settings}
\author{Marc Briant}
%\thanks{}
%\thanks{}

\begin{abstract}
We prove the stability of global equilibrium in a multi-species mixture, where the different species can have different masses, on the $3$-dimensional torus. We establish stability estimates in $L^\infty_{x,v}(w)$ where $w=w(v)$ is either polynomial or exponential, with explicit threshold. Along the way we extend recent estimates and stability results for the mono-species Boltzmann operator not only to the multi-species case but also to more general hard potential and Maxwellian kernels.

\end{abstract}

\maketitle

\vspace*{10mm}
%\smallskip

\textbf{Keywords:} Multi-species mixture; Boltzmann equation; Perturbative theory; Stability in $L^\infty$; Exponential trend to equilibrium. 

%%\smallskip
%%\textbf{AMS Subject Classification}: 82C40 Kinetic theory of gases,
%%76P05 Rarefied gas flows, Boltzmann equation, 54C70 Entropy, 60J75
%%Jump processes.
%

\tableofcontents

\section{Introduction} \label{sec:intro}

The multi-species Boltzmann equation rules the dynamics of a dilute gas composed of $N$ different species of chemically non-reacting mono-atomic particles. More precisely, this equation describes the time evolution of $F_i(t, x, v)$, the distribution of particles of the $i^{th}$ species in position and velocity, starting from an initial distribution. It can be modeled by the following system of Boltzmann equations, stated on $\R^+\times\T^3\times\R^3$,

\begin{equation}\label{multiBE}
\forall\: 1\leq i \leq N, \quad \partial_tF_i(t,x,v) + v\cdot \nabla_x F_i(t,x,v) = Q_i(\mathbf{F})(t,x,v)
\end{equation}
with initial data
$$\forall\: 1\leq i \leq N,\:\forall (x,v)\in \T^3\times\R^3, \quad F_i(0,x,v) = F_{0,i}(x,v).$$
Note that the distribution function of the system is given by the vector $\mathbf{F} = (F_1,\dots,F_N)$.

\bigskip
The Boltzmann operator $Q(\mathbf{F})=(Q_1(\mathbf{F}),\ldots, Q_N(\mathbf{F}))$ is given for all $i$ by 
$$Q_i(\mathbf{F}) = \sum\limits_{j=1}^N Q_{ij}(F_i,F_j),$$
where $Q_{ij}$ describes interactions between particles of either the same ($i=j$) or of different ($i\neq j$) species and is local in time and space.
$$Q_{ij}(F_i,F_j)(v) =\int_{\R^3\times \mathbb{S}^{2}}B_{ij}\left(|v - v_*|,\mbox{cos}\:\theta\right)\left[F_i'F_j^{'*} - F_iF_j^*\right]dv_*d\sigma,$$
where we used the shorthands $F_i'=F_i(v')$, $F_i=F_i(v)$, $F_j^{'*}=F_j(v'_*)$ and $F_j^*=F_j(v_*)$ with the definition 
$$\left\{ \begin{array}{rl} \displaystyle{v'} & \displaystyle{=\frac{1}{m_i+m_j}\pa{m_iv+m_jv_* +  m_j|v-v_*|\sigma}} \vspace{2mm} \\ \vspace{2mm} \displaystyle{v' _*}&\displaystyle{=\frac{1}{m_i+m_j}\pa{m_iv+m_jv_* -m_i  |v-v_*|\sigma}} \end{array}\right., \: \mbox{and} \quad \mbox{cos}\:\theta = \left\langle \frac{v-v_*}{\abs{v-v_*}},\sigma\right\rangle .$$
These expressions are a way to express the fact that collisions happening inside the gas are only binary and elastic. Physically, it means that $v'$ and $v'_*$ are the velocities of two molecules of species $i$ and $j$ before collision giving post-collisional velocities $v$ and $v_*$ respectively, with conservation of momentum and kinetic energy:
\begin{equation}\label{elasticcollision}
\begin{split}
m_iv + m_jv_* &= m_iv' + m_jv'_*,
\\\frac{1}{2}m_i\abs{v}^2 + \frac{1}{2}m_j\abs{v_*}^2 &= \frac{1}{2}m_i\abs{v'}^2 + \frac{1}{2}m_j\abs{v'_*}^2.
\end{split}
\end{equation}
\par The collision kernels $B_{ij}$ encode the physics of the interaction between two particles. We mention at this point that one can derive this type of equations from Newtonian mechanics at least formally in the case of single species \cite{Ce}\cite{CIP}. The rigorous validity of the mono-species Boltzmann equation from Newtonian laws is known for short times (Landford's theorem \cite{La} and, more recently, \cite{GST}\cite{PSS}).  
\bigskip

%%%%%%%%%%%%%%%%%%%%%%%%%%%%%%%%%%%%%%%%%%%%%%%%%%%%%%%%%%%%%%%%%%%%%%%%%%%%%%%%%%%%%%%%%%%%%%%%%%%%%%%%%%%%%%%%%%%%%%%%%%%%%%%
%%%%%%%%%%%%%%%%%%%%%%%%%%%%%%%%%%%%%%%%%%%%%%%%%%%%%%%%%%%%%%%%%%%%%%%%%%%%%%%%%%%%%%%%%%%%%%%%%%%%%%%%%%%%%%%%%%%%%%%%%%%%%%%
%%%%%%%%%%%%%%%%%%%%%%%%%%%%%%%%%%%%%%%%%%%%%%%%%%%%%%%%%%%%%%%%%%%%%%%%%%%%%%%%%%%%%%%%%%%%%%%%%%%%%%%%%%%%%%%%%%%%%%%%%%%%%%%

\subsection{Global equilibrium, stability and the perturbative regime}

It is now well known \cite{DMS}\cite{DJMZ}\cite{BGS} that the symmetries of the collision operator imply the conservation of the total number density $c_{\infty,i}$ of each species, of the total momentum of the gas $\rho_\infty u_\infty$ and its total energy $3\rho_\infty\theta_\infty /2$:
\begin{equation}\label{conservationlaws}
\begin{split}
\forall t\geq 0,\quad& c_{\infty,i} = \int_{\T^3\times\R^3} F_i(t,x,v)\:dxdv \quad (1\leq i \leq N)
\\& u_{\infty} = \frac{1}{\rho_\infty}\sum\limits_{i=1}^N\int_{\T^3\times\R^3} m_ivF_i(t,x,v)\:dxdv
\\&\theta_{\infty} = \frac{1}{3\rho_\infty}\sum\limits_{i=1}^N\int_{\T^3\times\R^3} m_i\abs{v-u_\infty}^2F_i(t,x,v)\:dxdv,
\end{split}
\end{equation}
where $\rho_\infty = \sum_{i=1}^Nm_ic_{\infty,i}$ is the global density of the gas. These expressions already show that there exist non-trivial interactions between each species and the mixture, unlike independent single-species Boltzmann equations that preserve the momentum and energy of each of the species independently.
\par The operator $\mathbf{Q}=(Q_1,\dots,Q_N)$ also satisfies a multi-species version of the classical H-theorem \cite{DMS} from which one deduces that there exists a unique global equilibrium, i.e. a stationary solution $\mathbf{F}$ to $\eqref{multiBE}$, associated to the initial data $\mathbf{F_0}(x,v) =(F_{0,1},\dots,F_{0,N})$. It is given by the global Maxwellian 
$$\forall\:1\leq i \leq N,\quad F_i(t,x,v) = F_i(v)= c_{\infty,i}\pa{\frac{m_i}{2\pi k_B \theta_{\infty}}}^{3/2}\mbox{exp}\cro{-m_i\frac{\abs{v-u_{\infty}}^2}{2k_B\theta_{\infty}}}.$$
By translating and rescaling the coordinate system we can always assume that $u_\infty=0$ and $k_B\theta_\infty=1$ so that the only global equilibrium is the normalized Maxwellian
\begin{equation}\label{mui}
\boldsymbol\mu =\pa{\mu_i}_{1\leq i \leq N} \quad\mbox{with}\quad \mu_i(v) = c_{\infty,i}\pa{\frac{m_i}{2\pi}}^{3/2}e^{-m_i\frac{\abs{v}^2}{2}}.
\end{equation}

\bigskip
Recently, E. Daus and the author \cite{BriDau} solved the existence and uniqueness problem for the multi-species Boltzmann equation $\eqref{multiBE}$ around the global equilibrium $\boldsymbol\mu$ in $L^1_vL^\infty_x(1+|v|^k)$ for $k$ larger than an explicit threshold. They also proved the stability of $\boldsymbol \mu$ by showing the exponential decay in time of $\mathbf{F}(t)-\boldsymbol\mu$ as long as one starts sufficently close to the global equilibrium. 
\par The present work is intended to be a companion paper to \cite{BriDau} and we thus refer to it and the references therein for more details about previous works. Our aim is to obtain a similar perturbative theory but in $L^\infty_{x,v}(w)$, where the weight $w$ is either polynomial or stretched exponential, and thus show the exponential stability of $\boldsymbol \mu$. The importance of polynomial weights relies on the physically relevant problems of initial data having solely finite moments, rather than having a Maxwellian decay. Such results are very recent in the case of mono-species Boltzmann equation \cite{GMM} and the present paper fills up the gap for the multi-species case. The interest of this $L^\infty_{x,v}$ study is that, combined to the already mentionned $L^1_vL^\infty_x$ one, a mere interpolation arguments then offers the stability of $\boldsymbol\mu$ in every underlying Lebesgue spaces in the $v$ variable, with explicit threshold on the polynomial weight.
\par  More precisely, we study the existence, uniqueness and exponential decay of solutions of the form $F_i(t,x,v) = \mu_i(v) + f_i(t,x,v)$ for all $i$ when one starts close to $\boldsymbol \mu$. This is equivalent to solving the perturbed multi-species Boltzmann system of equations
\begin{equation}\label{perturbedmultiBE}
\partial_t \mathbf{f} + v\cdot\nabla_x\mathbf{f} = \mathbf{L}(\mathbf{f}) + \mathbf{Q}(\mathbf{f}), 
\end{equation}
or equivalently in the non-vectorial form
$$\forall \:1\leq i\leq N,\quad \partial_t f_i + v\cdot\nabla_x f_i = L_i(\mathbf{f}) + Q_i(\mathbf{f}),$$
where $\mathbf{f}=(f_1,\dots,f_N)$ and the operator $\mathbf{L} =(L_1,\dots,L_N)$ is the linear Boltzmann operator given for all $1\leq i \leq N$ by
$$L_i(\mathbf{f}) = \sum\limits_{j=1}^N L_{ij}(f_i,f_j) \quad\mbox{with}\quad L_{ij}(f_i,f_j) = Q_{ij}(\mu_i,f_j)+Q_{ij}(f_i,\mu_j).$$
The conservations of individual mass, total momentum and total energy $\eqref{conservationlaws}$ of $\mathbf{F}$ are translated onto $\mathbf{f}$ as follows 
\begin{equation}\label{perturbedconservationlaws}
\begin{split}
\forall t\geq 0,\quad& 0 = \int_{\T^3\times\R^3} f_i(t,x,v)\:dxdv \quad (1\leq i \leq N)
\\& 0 = \sum\limits_{i=1}^N\int_{\T^3\times\R^3} \cro{\begin{array}{c} m_iv \\ m_i\abs{v}^2\end{array}}f_i(t,x,v)\:dxdv.
\end{split}
\end{equation}
\par If the mono-species Boltzmann equation has been extensively studied in the perturbative context (we refer to the discussion in \cite{BriDau} and the exhaustive review \cite{UkYa}), the only result in our knowledge of a Cauchy theory for the multi-species case is \cite{BriDau}. We however mention the existing works \cite{BGPS}\cite{BGS}, where they studied the diffusive limit of the linear part of $\eqref{perturbedmultiBE}$, \cite{DJMZ}, where they obtained an explicit spectral gap for the linear operator $\mathbf{L}$, and finally \cite{DMS} where they derived the multi-species H-theorem and dealt with the case of chemically reacting species.
\par The main strategy of the present work is an analytic and non-linear adaptation of a recent extension result for semigroups \cite{GMM} that has also been used in \cite{BriDau} in a different setting. In a nutshell, we decompose the linear operator as $\mathbf{L} = -\boldsymbol\nu + \mathbf{A}+\mathbf{B}$ where $\boldsymbol\nu$ is a positive multiplication operator, $\mathbf{A}$ has some regularising properties and $\mathbf{B}$ acts like a ``small perturbation'' of $\boldsymbol\nu$ and decompose our full equation $\eqref{perturbedmultiBE}$ into a system of differential equations for $\mathbf{f_1}+\mathbf{f_2}=\mathbf{f}$
\begin{eqnarray*}
&&\partial_t \mathbf{f_1} +v\cdot\nabla_x\mathbf{f_1} = -\boldsymbol\nu(v)\mathbf{f_1} + \mathbf{B}\pa{\mathbf{f_1}}+ \mathbf{Q}(\mathbf{f_1}+\mathbf{f_2})
\\&&\partial_t \mathbf{f_2}+v\cdot\nabla_x\mathbf{f_2} = \mathbf{L}(\mathbf{f_2}) + \mathbf{A}(\mathbf{f_1}).
\end{eqnarray*}
The key contribution of this article is a generalisation of the control of the operator $\mathbf{B}$ in weighted $L^\infty_{x,v}$. This operator was estimated in \cite{GMM} in the case of mono-species with hard spheres. We extend the result not only to multi-species but above all to hard potential and Maxwellian kernels (see rigorous definition below). The possibility of having different masses is an intricate computational extension. Indeed, of important note from \eqref{mui} is that each species evolves, at equilibrium, at its own exponential rate and one has to understand how the linear operator actually mix these different speeds. Moreover, the non hard spheres case brings new difficulties and new behaviours for small relative velocities.
\par We conclude by emphasizing that our result includes the case of mono-species Boltzmann equation recently obtained \cite{GMM} and extend them to more general kernels. Our proofs will also involve to track down thoroughly explicit constants in order to exhibit the explicit threshold for the polynomial weights.
\bigskip

%%%%%%%%%%%%%%%%%%%%%%%%%%%%%%%%%%%%%%%%%%%%%%%%%%%%%%%%%%%%%%%%%%%%%%%%%%%%%%%%%%%%%%%%%%%%%%%%%%%%%%%%%%%%%%%%%%%%%%%%%%%%%%%
%%%%%%%%%%%%%%%%%%%%%%%%%%%%%%%%%%%%%%%%%%%%%%%%%%%%%%%%%%%%%%%%%%%%%%%%%%%%%%%%%%%%%%%%%%%%%%%%%%%%%%%%%%%%%%%%%%%%%%%%%%%%%%%
%%%%%%%%%%%%%%%%%%%%%%%%%%%%%%%%%%%%%%%%%%%%%%%%%%%%%%%%%%%%%%%%%%%%%%%%%%%%%%%%%%%%%%%%%%%%%%%%%%%%%%%%%%%%%%%%%%%%%%%%%%%%%%%

\subsection{Main result and organisation of the paper}
We start with some conventions and notations.
\par First, to avoid any confusion, vectors and vector-valued operators in $\R^N$ will be denoted by a bold symbol, whereas their components by the same indexed symbol. For instance, $\mathbf{W}$ represents the vector or vector-valued operator $(W_1,\dots,W_N)$.  We shall use the following shorthand notation
$$\langle v \rangle = \sqrt{1+\abs{v}^2}.$$
The convention we choose for functional spaces is to index the space by the name of the concerned variable, so we have for $p$ in $[1,+\infty]$
$$L^p_{[0,T]} = L^p\pa{[0,T]},\quad L^p_{t} = L^p \left(\R^+\right),\quad L^p_x = L^p\left(\T^3\right), \quad L^p_v = L^p\left(\R^3\right).$$
At last, for $\func{\mathbf{W}=(W_1, \ldots, W_N)}{\R^3}{\R^+}$ a strictly positive measurable function in $v$, we will use the following vector-valued weighted Lebesgue spaces defined by their norms
$$\norm{\mathbf{f}}_{L^{\infty}_{x,v}\pa{\mathbf{W}}} = \sum\limits_{i=1}^N \norm{f_i}_{L^{\infty}_{x,v}(W_i)} \quad\mbox{where}\quad   \norm{f_i}_{L^{\infty}_{x,v}\pa{W_i}} = \sup\limits_{(x,v)\in \T^3 \times \R^3}\big(\left|f_i(x,v)\right|W_i(v)\big).$$

\bigskip
We will use the following assumptions on the collision kernels $B_{ij}$. Note that there were the assumptions also made in \cite{BriDau} (less restrictive than \cite{DJMZ}) and translate into the commonly used description for the mono-species collision kernel \cite{Ce}\cite{CIP}.
\renewcommand{\labelenumi}{(H\theenumi)}
\begin{enumerate}
\item The following symmetry holds
$$B_{ij}(|v-v_*|,\cos\theta) = B_{ji}(|v-v_*|,\cos\theta)\quad\mbox{for }1\le i,j\le N$$
which means that the probability of a particle of species $i$ colliding with a particle of species $j$ is the same as $j$ colliding with $i$.
\item The collision kernels decompose into the product
$$ B_{ij}(|v-v_*|,\cos\theta) = \Phi_{ij}(|v-v_*|)b_{ij}(\cos\theta),
\quad 1\le i,j\le N,$$
where the functions $\Phi_{ij}\ge 0$ are called kinetic part and $b_{ij}\ge 0$ angular part. This is a common assumption as it is technically more convenient and also covers a wide range of physical applications.
\item The kinetic part has the form of hard or Maxwellian ($\gamma=0$) potentials, \textit{i.e.}
$$\Phi_{ij}(|v-v_*|)=C_{ij}^{\Phi}|v-v_*|^{\gamma}, \quad C_{ij}^{\Phi}>0,~\:\gamma\in[0,1], \quad \forall\: 1 \leq i,j\leq N$$
which describes inverse-power laws potential in between particles \cite[Section 1.4]{Vi2}.
\item For the angular part, we assume a strong form of Grad's angular cutoff (first introduced in \cite{Gr1}), that is: there exist constants $C_{b1}$, $C_{b2}>0$ such that
for all $1\le i,j\le N$ and $\theta\in[0,\pi]$,
$$  0<b_{ij}(\cos\theta)\le C_{b1}|\sin\theta|\,|\cos\theta|, \quad b'_{ij}(\cos\theta)\le C_{b2}.$$
Furthermore, 
$$  C^b := \min_{1\le i\le N}\inf_{\sigma_1,\sigma_2\in\S^2}\int_{\S^2}\min\big\{	b_{ii}(\sigma_1\cdot\sigma_3),b_{ii}(\sigma_2\cdot\sigma_3)\big\}\:d\sigma_3 > 0. $$
Again, this positivity assumption is satisfied by most of the physically relevant cases with Grad's angular cutoff, for instance hard spheres ($b=\gamma=1$). It is required even for the mono-species case in order to construct an explicit spectral gap for the linear Boltzmann operator \cite{DJMZ}\cite{BM}\cite{Mo1}.
\end{enumerate}

\noindent We emphasize here that the important cases of Maxwellian molecules ($\gamma=0$ and $b=1$) and of hard spheres ($\gamma=b=1$) are included in our study. We shall use the standard shorthand notations
\begin{equation}\label{constantsbij}
b_{ij}^\infty = \norm{b_{ij}}_{L^\infty_{[-1,1]}} \quad\mbox{and}\quad l_{b_{ij}} = \norm{b\circ \cos}_{L^1_{\S^2}}.
\end{equation}

\bigskip
Under these assumptions we shall prove the following theorem.
\bigskip
\begin{theorem}\label{theo:main}
Let the collision kernels $B_{ij}$ satisfy assumptions $(H1) - (H4)$ and let $w_i=e^{\kappa_1(\sqrt{m_i}\abs{v})^{\kappa_2}}$ with $\kappa_1 >0$ and $\kappa_2$ in $(0,2)$ or $w_i=\langle \sqrt{m_i}v \rangle^k$ with $k> k_0$ where $k_0$ is the minimal integer such that
\begin{equation}\label{k0}
C_B(\mathbf{w}) = \frac{4\pi}{k-1-\gamma}\max\limits_{1\leq i \leq N}\cro{\pa{\sum\limits_{j=1}^NC^\Phi_{ij}b_{ij}^\infty\frac{(m_i+m_j)^2}{m_i^{2-\frac{\gamma}{2}}m_j^{\frac{5+\gamma}{2}}}}\pa{\sum\limits_{1\leq k \leq N}\frac{\sqrt{m_k}}{C^\Phi_{ik}l_{b_{ik}}}}} <1.
\end{equation}
Then there exist $\eta_w$, $C_w$ and $\lambda_w >0$ such that for any $\mathbf{F_0} = \boldsymbol\mu + \mathbf{f_0} \geq 0$ satisfying the conservation of mass, momentum and energy $\eqref{conservationlaws}$ with $u_\infty=0$ and $\theta_\infty=1$, if
$$\norm{\mathbf{F_0} - \boldsymbol\mu}_{L^\infty_{x,v}(\mathbf{w})}\leq \eta_w$$
then there exists a unique solution $\mathbf{F}=\boldsymbol\mu + \mathbf{f}$ in $L^\infty_{x,v}(\mathbf{w})$ to the multi-species Boltzmann equation $\eqref{multiBE}$ with initial data $\mathbf{f_0}$. Moreover, $\mathbf{F}$ is non-negative, satisfies the conservation laws and
$$\forall t\geq 0, \quad \norm{\mathbf{F}-\boldsymbol\mu}_{L^\infty_{x,v}(\mathbf{w})} \leq C_w e^{-\lambda_w t}\norm{\mathbf{F_0}-\boldsymbol\mu}_{L^\infty_{x,v}(\mathbf{w})}.$$
The constants are explicit and only depend on $N$, $\mathbf{w}$, the different masses $m_i$ and the collision kernels.
\end{theorem}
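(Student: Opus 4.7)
The plan is to implement the enlargement-of-functional-space strategy of Gualdani-Mischler-Mouhot, already adapted to the multi-species case in the smaller space $E_1 = L^1_v L^\infty_x(\langle v\rangle^k)$ in BriDau. The objective is to transfer the known exponential decay of the semigroup $e^{t(\mathbf{L}-v\cdot\nabla_x)}$ from $E_1$ to the larger space $E=L^\infty_{x,v}(\mathbf{w})$ with an explicit spectral gap. I begin by writing $\mathbf{L}=-\boldsymbol\nu+\mathbf{K}$, where $\nu_i(v)=\sum_j\int B_{ij}\mu_j(v_*)\,dv_*d\sigma$ is the multi-species collision frequency (satisfying $\nu_i(v)\sim\langle v\rangle^\gamma$ under (H3)-(H4)), and then split $\mathbf{K}=\mathbf{A}_\delta+\mathbf{B}_\delta$ by truncating the gain part of $\mathbf{K}$ to a compact region of $(v,v_*,\sigma)$-space at scale $\delta$. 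The operator $\mathbf{A}_\delta$ has a smooth compactly supported kernel in $v$ and therefore maps $E$ continuously into $E_1$, whereas $\mathbf{B}_\delta$ is dominated pointwise by a fraction of $\boldsymbol\nu$ whose multiplicative constant tends to $C_B(\mathbf{w})$ as $\delta\to 0$.

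The crux of the proof is the sharp quantification of this smallness, namely establishing
$$\sup_{v,\,i}\frac{|B_{\delta,i}(\mathbf{f})(v)|\,w_i(v)}{\nu_i(v)\,\|\mathbf{f}\|_{L^\infty_v(\mathbf{w})}}\leq\zeta(\delta,\mathbf{w}),\qquad \lim_{\delta\to 0}\zeta(\delta,\mathbf{w})=C_B(\mathbf{w}).$$
For the polynomial weight $w_i=\langle\sqrt{m_i}v\rangle^k$ this requires estimating integrals of the form $\int_{\R^3\times\S^2}B_{ij}(|v-v_*|,\cos\theta)(w_i(v')/w_i(v))\sqrt{\mu_j(v_*)}\,dv_*d\sigma$ and comparing them to a lower bound for $\nu_i$. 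Two features make the analysis delicate. First, the collision rules \eqref{elasticcollision} depend on the mass ratio, so the weight quotient $w_i(v')/w_i(v)$ mixes species through $(m_i+m_j)$-factors that must be unpacked by a careful change of variables. Second, for hard potentials $\gamma<1$ the kernel is singular in small $|v-v_*|$; a Carleman-type splitting into $|v-v_*|\lesssim R$ and $|v-v_*|\gtrsim R$ is necessary, with the Gaussian factor $\sqrt{\mu_j(v_*)}$ compensating the singularity and producing the species-dependent prefactor $(m_i+m_j)^2/m_i^{2-\gamma/2}m_j^{(5+\gamma)/2}$, while the large relative velocity part gives the factor $1/(k-1-\gamma)$. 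Optimising these contributions species by species and then summing yields exactly condition \eqref{k0}. The stretched exponential case follows from analogous but easier computations since the exponential weight absorbs any polynomial loss.

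Once the $\mathbf{B}_\delta$ estimate is in hand, I decompose $\mathbf{f}=\mathbf{f_1}+\mathbf{f_2}$ as suggested in the introduction. Since $\boldsymbol\nu-\mathbf{B}_\delta$ is dissipative on $E$ under $C_B(\mathbf{w})<1$, integration along the characteristics gives exponential decay of $\mathbf{f_1}$ at a rate $\lambda_1>0$ with forcing $\mathbf{Q}(\mathbf{f_1}+\mathbf{f_2})$. For $\mathbf{f_2}$, the source $\mathbf{A}_\delta(\mathbf{f_1})$ sits in $E_1$ by construction, and the BriDau semigroup decay on $E_1$ transfers the exponential decay to $E_1\hookrightarrow E$, the (small) weight loss being absorbed by the assumption $k>k_0$. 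The nonlinear term is handled by the standard bilinear bound $\|\mathbf{Q}(\mathbf{f},\mathbf{g})/\boldsymbol\nu\|_{L^\infty_{x,v}(\mathbf{w})}\lesssim \|\mathbf{f}\|_{L^\infty_{x,v}(\mathbf{w})}\|\mathbf{g}\|_{L^\infty_{x,v}(\mathbf{w})}$, and a Banach fixed point on the weighted exponential space $\{\mathbf{f}:e^{\lambda_w t}\mathbf{f}\in L^\infty_t(E)\}$ of radius $\eta_w$ closes the scheme. Non-negativity of $\mathbf{F}=\boldsymbol\mu+\mathbf{f}$ is obtained by iterating the gain-versus-loss representation along the characteristics, and the conservation laws \eqref{perturbedconservationlaws} are propagated from $t=0$ by uniqueness. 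The main obstacle is unambiguously the second paragraph: the sharp pointwise bound on $\mathbf{B}_\delta$ with its explicit dependence on the masses $m_i$, the constants $C^\Phi_{ij}$, $b^\infty_{ij}$, and the exponent $\gamma$ that produces the threshold \eqref{k0}; the remaining ingredients are then essentially bookkeeping along the lines of the mono-species $L^\infty$ theory of GMM.
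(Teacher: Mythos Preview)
Your overall architecture matches the paper's exactly: the same $\mathbf{L}=-\boldsymbol\nu+\mathbf{A}^{(\delta)}+\mathbf{B}^{(\delta)}$ splitting, the same $\mathbf{f}=\mathbf{f_1}+\mathbf{f_2}$ system, the same reliance on \cite{BriDau} for existence, uniqueness and the decay of $\mathbf{f_2}$, and the same Duhamel-along-characteristics argument for $\mathbf{f_1}$. Two points deserve correction or sharpening.

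\textbf{The embedding direction for $\mathbf{f_2}$ is wrong as written.} You assert $E_1=L^1_vL^\infty_x(\langle v\rangle^n)\hookrightarrow E=L^\infty_{x,v}(\mathbf{w})$, but the inclusion goes the other way (indeed the paper uses $E\subset E_1$ to invoke the existence theory of \cite{BriDau}). Decay of the semigroup in $L^1_vL^\infty_x$ alone would \emph{not} give you $L^\infty_{x,v}(\mathbf{w})$ decay of $\mathbf{f_2}$. The paper closes this by observing (Lemma~\ref{lem:control A}) that $\mathbf{A}^{(\delta)}$ has a compactly supported kernel and therefore maps $L^\infty_{x,v}(\mathbf{w})$ into the much smaller space $L^\infty_{x,v}(\langle v\rangle^\beta\boldsymbol\mu^{-1/2})$; the exponential decay of $\mathbf{f_2}$ is then quoted from \cite[Propositions~6.6--6.7]{BriDau} directly in that Gaussian-weighted $L^\infty$ space, which \emph{does} embed into $L^\infty_{x,v}(\mathbf{w})$. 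Your scheme is easily repaired by replacing $E_1$ with $L^\infty_{x,v}(\langle v\rangle^\beta\boldsymbol\mu^{-1/2})$ in that step.

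\textbf{The mechanism producing the sharp constant \eqref{k0}.} Your description via ``integrals of the form $\int B_{ij}(w_i(v')/w_i(v))\sqrt{\mu_j(v_*)}\,dv_*d\sigma$'' and a splitting in $|v-v_*|$ is not quite how the paper obtains the explicit $C_B(\mathbf{w})$. The paper first proves a pointwise bound on $Q^+_{ij}(F,G)$ for \emph{radially symmetric} $F,G$ (Lemma~\ref{lem:radiallysym}), reducing to a two-dimensional integral in the radial variables $(r',r'_*)$ with kernel $r'r'_*\min\{m_ir,m_jr_*,m_ir',m_jr'_*\}/(r|r'-r'_*|^{1-\gamma})$ restricted to $\{m_i(r')^2+m_j(r'_*)^2\geq m_ir^2\}$. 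The splitting that yields the factor $1/(k-1-\gamma)$ and the mass-dependent prefactor is then performed in $(r',r'_*)$ (Steps~3.1 and the $I_2^{(\eps)}$ computation in the proof of Lemma~\ref{lem:control B}), not in $|v-v_*|$. This radial-variable rewriting is what makes the constants explicit and is the genuine technical novelty beyond the hard-sphere mono-species computation in \cite{GMM}.

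Finally, a minor remark: the paper does not run a Banach fixed point for the nonlinear closure but rather an \textit{a priori} continuation argument (a $T_{\max}$ bootstrap on $\sup_{s\leq t}e^{\eps\nu_0 s}\|\mathbf{f_1}(s)\|$), since existence and uniqueness are already inherited from \cite{BriDau}. This is cosmetic; your fixed-point formulation would work equally well.
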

\bigskip

\begin{remark}\label{rem:mainresults}
We make a few comments about the theorem above.
\begin{enumerate}
\item[(1)] As mentioned in the introduction the above Theorem has been proved in $L^1_vL^\infty_x(1+\abs{v}^k) \supset L^\infty_{x,v}(\mathbf{w})$ in \cite{BriDau}. We therefore do not need to tackle the issue of existence or uniqueness and are only left with proving the \textit{a priori} stability of $\boldsymbol\mu$ in the considered spaces.
\item[(2)] Unlike the classical mono-species Boltzmann equation, the natural weights for the multi-species case strongly depend on the mass of each species. Indeed,as given by $w_i$, each species has its own specific weight depending on its mass. As we shall see, this is needed to balance the cross-interactions generated inside the mixture. More precisely, the Boltzmann operators involve interactions between different species that will be weighted differently for each species, a mixing operates so that each species is balanced by its own weight.
\item[(3)] We emphasize here that the result still holds for any other global equilibrium $\mathbf{M}(c_{i,\infty},u_\infty,\theta_\infty)$ and also that the uniqueness has only been obtained in a perturbative regime, that is among the solutions written under the form $\mathbf{F} = \boldsymbol\mu +\mathbf{f}$ with $\norm{\mathbf{f}}_{L^\infty_{x,v}(\mathbf{w})}$ small \cite{BriDau}.
\item[(4)]In the case of identical masses and hard sphere collision kernels ($\gamma=b=1$) we recover the threshold $k_0$ which has recently been obtained in the mono-species case \cite{GMM}.
\end{enumerate}
\end{remark}
\bigskip

\section{Decomposition of the linear operator and toolbox}\label{sec:decomposition toolbox}

As noticed in \cite[Section 4]{GMM} for the mono-species Boltzmann linear operator and extended in \cite[Section 6]{BriDau} for the multi-species case, the linear operator $\mathbf{L}$ can be decompose into a regularizing operator and a ``small perturbation'' of the collision frequency $\boldsymbol \nu$, where $\boldsymbol \nu$ is a multiplicative operator defined by
\begin{equation}\label{nu}
\nu_i(v) = \sum\limits_{j=1}^N \nu_{ij}(v),
\end{equation}
with
$$\nu_{ij}(v) = C_{ij}^{\Phi}\int_{\R^3\times\mathbb{S}^{2}} b_{ij}\left(\mbox{cos}\:\theta\right)\abs{v-v_*}^\gamma \mu_j(v_*)\:d\sigma dv_*.$$
Each of the $\nu_{ij}$ could be seen as the collision frequency $\nu(v)$ of a single-species Boltzmann kernel with kernel $B_{ij}$. It is well-known (for instance \cite{Ce}\cite{CIP}\cite{Vi2}\cite{GMM}) that under our assumptions: $\nu_{ij}(v) \sim \langle v \rangle^\gamma$. We can be more explicit by making the change of variable $v_*\mapsto \sqrt{m_j}v_*$
$$\nu_{ij}(v) = \frac{C_{ij}^\Phi l_{b_{ij}}}{4\pi m_j^{\frac{1+\gamma}{2}}}\:\nu\left(\sqrt{m_j}v\right)$$
with $\nu$ being the frequency collision associated to the hard sphere kernel $B=1$. From \cite[Remark 4.1]{GMM} it follows the explicit equivalence
\begin{equation}\label{nui bound}
\frac{C_{ij}^\Phi l_{b_{ij}}}{m_j^{\frac{1+\gamma}{2}}}\max\br{m_j^{\gamma/2}\abs{v}^\gamma,\:\sqrt{\frac{2}{e\pi}}} \leq \nu_{ij}(v) \leq \frac{C_{ij}^\Phi l_{b_{ij}}}{m_j^{\frac{1+\gamma}{2}}}(m_j^{\gamma/2}\abs{v}^\gamma+2).
\end{equation}

We shall use the decomposition derived in \cite{BriDau}, that we recall now.

\bigskip
For $\delta$ in $(0,1)$, we consider $\Theta_\delta = \Theta_\delta(v,v_*,\sigma)$ in $C^\infty$ that is bounded by one everywhere, is exactly one on the set
$$\left\{\abs{v}\leq \delta^{-1}    \quad\mbox{and}\quad 2\delta\leq\abs{v-v_*}\leq \delta^{-1}    \quad\mbox{and}\quad \abs{\mbox{cos}\:\theta} \leq 1-2\delta \right\}$$
and whose support is included in
$$\left\{\abs{v}\leq 2\delta^{-1}    \quad\mbox{and}\quad \delta\leq\abs{v-v_*}\leq 2\delta^{-1}    \quad\mbox{and}\quad \abs{\mbox{cos}\:\theta} \leq 1-\delta \right\}.$$
We define the splitting
\begin{equation}\label{decomposition A B}
\mathbf{L} = -\boldsymbol\nu + \mathbf{B}^{(\delta)} +\mathbf{A}^{(\delta)},
\end{equation}
with, for all $i$ in $\br{1,\dots,N}$,
$$A^{(\delta)}_i (\mathbf{h}) (v) = \sum\limits_{j=1}^N C^\Phi_{ij}\int_{\R^3\times\mathbb{S}^2}\Theta_\delta\left[\mu^{'*}_jh'_i + \mu'_ih^{'*}_j - \mu_i h^*_j\right]b_{ij}\left(\mbox{cos}\:\theta\right)\abs{v-v_*}^\gamma\:d\sigma dv_*$$
and
\begin{equation}\label{Bidelta}
B^{(\delta)}_i (\mathbf{h}) (v) = \sum\limits_{j=1}^N C^\Phi_{ij}\int_{\R^3\times\mathbb{S}^2}\pa{1-\Theta_\delta}\left[\mu^{'*}_jh'_i + \mu'_ih^{'*}_j - \mu_i h^*_j\right]b_{ij}\left(\mbox{cos}\:\theta\right)\abs{v-v_*}^\gamma\:d\sigma dv_*.
\end{equation}

\bigskip
We have the following regularizing effect for $\mathbf{A^{(\delta)}}$.
\bigskip
\begin{lemma}\label{lem:control A}
Let $w_i=e^{\kappa_1(\sqrt{m_i}\abs{v})^{\kappa_2}}$ with $\kappa_1 >0$ and $\kappa_2$ in $(0,2)$ or $w_i=\langle \sqrt{m_i}v \rangle^k$ with $k> 5+\gamma$. Then for any $\beta>0$ and $\delta$ in $(0,1)$, there exists $C_A>0$ such that for all $\mathbf{f}$ in $L^\infty_{x,v}(\mathbf{w})$
$$\norm{\mathbf{A^{(\delta)}} \pa{\mathbf{f}}}_{L^\infty_{x,v}\pa{\langle v\rangle^\beta\boldsymbol\mu^{-1/2}}} \leq C_A\norm{\mathbf{f}}_{L^\infty_{x,v}(\mathbf{w})}.$$
The constant $C_A$ is constructive and only depends on $k$, $\beta$, $\delta$, $N$ and the collision kernels.
\end{lemma}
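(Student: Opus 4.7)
The key observation is that the cutoff function $\Theta_\delta$ is compactly supported in the three variables $(v,v_*,\sigma)$, so the estimate reduces to bounding a smooth function on a compact region where none of the weights blow up.

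First I would reduce to a compact support in $v$: since $\Theta_\delta$ vanishes for $|v|>2\delta^{-1}$, the function $A_i^{(\delta)}(\mathbf{f})(\cdot,v)$ is identically zero there. On the complement $|v|\leq 2\delta^{-1}$ the prefactor $\langle v \rangle^\beta \mu_i(v)^{-1/2}$ is a continuous function on a compact set and is therefore bounded by some constant $C_1=C_1(\delta,\beta,m_i)$. Next, on the support of $\Theta_\delta$ one has $|v|\leq 2\delta^{-1}$ and $|v-v_*|\leq 2\delta^{-1}$, which already gives $|v_*|\leq 4\delta^{-1}$. Using the explicit formulas
$$v'=\frac{m_iv+m_jv_*+m_j|v-v_*|\sigma}{m_i+m_j}, \qquad v'_*=\frac{m_iv+m_jv_*-m_i|v-v_*|\sigma}{m_i+m_j},$$
the triangle inequality yields $|v'|,|v'_*|\leq C(\delta^{-1},m_i,m_j)$. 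Consequently the inverse weights $w_i(v')^{-1}$, $w_j(v'_*)^{-1}$, $w_j(v_*)^{-1}$ are all bounded uniformly on the support of $\Theta_\delta$ by some constant $C_2=C_2(\delta,\mathbf{w})$, regardless of whether $\mathbf{w}$ is of polynomial or stretched exponential type.

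Then for each of the three terms composing $A_i^{(\delta)}(\mathbf{f})$ I would use the pointwise bound $|f_\ell(x,u)|\leq \|f_\ell\|_{L^\infty_{x,v}(w_\ell)} w_\ell(u)^{-1}$ applied at $u=v'$, $u=v'_*$, or $u=v_*$ respectively. Combined with Step~1 and Step~2, this gives a pointwise bound of the form
$$\big|A_i^{(\delta)}(\mathbf{f})(x,v)\big|\langle v\rangle^\beta \mu_i(v)^{-1/2}\,\leq\, C_1 C_2 \,\|\mathbf{f}\|_{L^\infty_{x,v}(\mathbf{w})} \sum_{j=1}^N C^\Phi_{ij}\int_{\mathrm{supp}\,\Theta_\delta}\!\!\!\mu_\sharp\, b_{ij}(\cos\theta)\,|v-v_*|^\gamma\,d\sigma dv_*,$$
where $\mu_\sharp$ stands for one of the three bounded Maxwellian factors $\mu_j^{\prime*},\mu_i',\mu_i\leq \|\boldsymbol\mu\|_\infty$. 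The remaining integrand is controlled by $b_{ij}^\infty (2\delta^{-1})^\gamma\|\boldsymbol\mu\|_\infty$ and the integration region has finite measure bounded by $4\pi(2\delta^{-1})^3$, so summing over $j$ yields the required estimate with an explicit constant $C_A$ depending only on $k$ (via the constant $C_2$), $\beta$, $\delta$, $N$ and the collision kernels $B_{ij}$.

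The argument is thus essentially a compactness statement and presents no analytic obstacle. The only care required is bookkeeping: exhibiting the explicit dependence of $C_A$ on $\delta,\beta,k,N,m_i$ and the kernel constants $C^\Phi_{ij},b_{ij}^\infty$, as required for the constructive threshold in Theorem~\ref{theo:main}. The mild condition $k>5+\gamma$ plays no active role in this lemma but is kept for consistency with the companion estimates on $\mathbf{B}^{(\delta)}$ used elsewhere in the paper.
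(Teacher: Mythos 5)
Your proof is correct and essentially fills in the computation that the paper abbreviates. The paper itself takes a slightly different route: it invokes the kernel representation of $\mathbf{A}^{(\delta)}$ from \cite[Lemma 6.2]{BriDau}, i.e.\ it cites the fact that after a Carleman-type change of variables the gain terms $f_i(v')$, $f_j(v'_*)$ can be rewritten as integrals against $\mathbf{f}(x,v_*)$ with a compactly supported, bounded kernel $\mathbf{k}_A^{(i),(\delta)}(v,v_*)$, whence the estimate is immediate. You instead argue directly from the pointwise compact support of $\Theta_\delta$: on $\mathrm{supp}\,\Theta_\delta$ all of $|v|$, $|v_*|$, $|v'|$, $|v'_*|$ are controlled, the output weight $\langle v\rangle^\beta\mu_i(v)^{-1/2}$ is bounded, the inverse weights $w_\ell^{-1}\leq 1$ are trivially bounded, and the remaining integral over a bounded region in $v_*$ and over $\S^2$ is finite. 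Both arguments are valid; yours is more self-contained and elementary, and has the merit of not requiring any change-of-variables machinery, while the paper's kernel form is useful elsewhere (it underlies the regularizing estimates used for the $\mathbf{f_2}$ equation in \cite{BriDau}). You are also right that the threshold $k>5+\gamma$ plays no role in this particular estimate and is carried only for consistency with the other lemmas.
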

\bigskip

\begin{proof}[Proof of Lemma $\ref{lem:control A}$]
The operator $\mathbf{A^{(\delta)}}$ can be written as a kernel operator where the kernels $\mathbf{k^{(i),(\delta)}_A}$ are of compact support (see \cite[Lemma 6.2]{BriDau}):
$$\forall \: i\in \br{1,\dots,N},\quad A^{(\delta)}_i (\mathbf{f})(x,v) =\int_{\R^3} \langle \mathbf{k^{(i),(\delta)}_A}(v,v_*),\mathbf{f}(x,v_*)\rangle\:dv_*.$$
The desired estimate is therefore straightforward.
\end{proof}
\bigskip

\bigskip
The estimate on $\mathbf{B^{(\delta)}}$ is more delicate as it requires sharp estimates. Indeed, as one would like to control $\mathbf{B^{(\delta)}}$ by the collision frequency $\boldsymbol\nu$ one needs to derive an exact ratio between the latter two operators in our weighted spaces. Since our weights are radially symmetric, we need a technical lemma that gives an estimate of the collision operator when applied to radially symmetric functions. Such a symmetry brings more precise estimates on the operator. Note that this result is an extension of \cite[Lemma 4.6]{GMM} (proved in the case of hard spheres $\gamma=b=1$) not only to the multi-species framework but also to more general kernels.

\bigskip
\begin{lemma}\label{lem:radiallysym}
For $i$, $j$ in $\br{1,\dots,N}$, define
$$Q^+_{ij}(F,G) = \int_{\R^3\times\S^2} b_{ij}(\cos\theta)\abs{v-v_*}^\gamma F(v')G(v'_*)\:d\sigma dv_*.$$
Then for $F$ and $G$ radially symmetric functions in $L^1_v$ we have the following bound
$$\abs{Q^+_{ij}(F,G)(v)} \leq \int_0^{+\infty} \int_0^{+\infty} \mathbf{1}_{\br{m_i(r')^2 + m_j (r'_*)^2 \geq m_i r^2}}B(r,r',r'_*)\abs{F}(r')\abs{G}(r'_*)\:dr'dr'_*,$$
where we denote $r=\abs{v}$ and
$$B(r,r',r'_*) = 16 \pi^2b_{ij}^\infty\frac{(m_i+m_j)^2}{m_im_j^2} \frac{r'r'_*}{r\abs{r'-r'_*}^{1-\gamma}}\min\br{m_ir,m_jr_*,m_ir',m_jr'_*},$$
with the definition $r_* = \sqrt{m_im_j^{-1}(r')^2 + (r'_*)^2-m_im_j^{-1}r^2}$.
\end{lemma}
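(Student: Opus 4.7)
The plan is to generalize the spherical-coordinate argument of \cite[Lemma 4.6]{GMM} (done there for mono-species hard spheres) to the multi-species, general hard/Maxwellian setting. The first step is to bound $b_{ij}$ by $b_{ij}^\infty$ and take absolute values inside the integral; by radial symmetry of $F$ and $G$, $Q^+_{ij}(F,G)(v)$ depends only on $r=|v|$, so we may set $v=re_3$.

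Next I would parametrize $v_*\in\R^3$ by $(r_*,\rho,\phi_V)$ with $r_*=|v_*|$, $\rho=|v-v_*|$ and $\phi_V\in[0,2\pi)$ the azimuthal angle around $e_3$; an elementary Jacobian gives $dv_*=(r_*\rho/r)\,dr_*\,d\rho\,d\phi_V$ with $\rho\in[|r-r_*|,r+r_*]$. I would parametrize $\sigma\in\S^2$ by $(c,\phi)$ with $c=\cos\theta=\sigma\cdot V/|V|$ and $\phi$ the azimuth around $V/|V|$. Rotational invariance around $e_3$ integrates out $\phi_V$ (factor $2\pi$). The collision rules combined with energy conservation yield
\begin{align*}
(r')^2 &= r^2 - \tfrac{2m_j\rho r(1-c)u}{M} + \tfrac{2m_j\rho r}{M}\sqrt{(1-c^2)(1-u^2)}\cos\phi + \tfrac{2m_j^2\rho^2(1-c)}{M^2},\\
m_j(r'_*)^2 &= m_i(r^2-(r')^2) + m_j r_*^2,
\end{align*}
with $u=(r^2+\rho^2-r_*^2)/(2r\rho)$ and $M=m_i+m_j$; the admissibility $(r'_*)^2\geq 0$ is precisely the indicator in the statement.

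Writing $(r')^2 = P_1(c) + P_2(c)\cos\phi$ with explicit $P_1,P_2$, the map $\phi\mapsto r'$ at fixed $c$ is generically two-to-one with Jacobian $|\partial(r')^2/\partial\phi|^{-1}=(P_2|\sin\phi|)^{-1}$. A short expansion shows that $(P_2\sin\phi)^2 = P_2^2 - ((r')^2-P_1)^2$ is a quadratic in $c$ that factors as $(X+S^2)(c-c_1)(c_2-c)$ with $X+S^2 = 4(m_j/M)^2\rho^2 E$, where
\[
E = \tfrac{m_i(r')^2+m_j(r'_*)^2}{M} - \tfrac{m_im_j}{M^2}\rho^2 = \tfrac{|P|^2}{M^2}
\]
and $P = m_iv'+m_jv'_*$ is the conserved total momentum. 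The classical identity $\int_{c_1}^{c_2}dc/\sqrt{(c-c_1)(c_2-c)}=\pi$ closes the angular integral, and after the change $r_*\to r'_*$ via $r_*dr_*=r'_*dr'_*$ (energy conservation at fixed $r'$) one arrives at
\[
Q^+_{ij}(F,G)(r)\leq \tfrac{4\pi^2 M b_{ij}^\infty}{m_j r}\iint r'r'_* F(r')G(r'_*)\,\Big(\int\tfrac{\rho^\gamma}{\sqrt{E}}\,d\rho\Big)\,dr'\,dr'_*.
\]

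The last step would bound the $\rho$-integral. Splitting $\rho^\gamma = \rho\cdot\rho^{\gamma-1}$ and using $\rho\geq|r'-r'_*|$ (triangle inequality for $|v'-v'_*|$) together with $\gamma\leq 1$ gives $\rho^{\gamma-1}\leq|r'-r'_*|^{\gamma-1}$. The antiderivative of $\rho/\sqrt{E}$ is $-M^2\sqrt{E}/(m_im_j)$, so $\int\rho/\sqrt{E}\,d\rho = (M^2/m_im_j)[\sqrt{E(\rho_{\min})}-\sqrt{E(\rho_{\max})}]$; explicit evaluation at the endpoints yields $\sqrt{E(|r-r_*|)}=(m_ir+m_jr_*)/M$, $\sqrt{E(r+r_*)}=|m_ir-m_jr_*|/M$ and the primed analogues. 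Combining the universal upper bound $\sqrt{E(\rho_{\min})}\leq\min\{m_ir+m_jr_*,m_ir'+m_jr'_*\}/M$ with the universal lower bound $\sqrt{E(\rho_{\max})}\geq\max\{|m_ir-m_jr_*|,|m_ir'-m_jr'_*|\}/M$ and the elementary identity $(A+B)-|A-B|=2\min(A,B)$ applied to each pair gives
\[
\sqrt{E(\rho_{\min})}-\sqrt{E(\rho_{\max})}\leq \tfrac{2}{M}\min\{m_ir,m_jr_*,m_ir',m_jr'_*\},
\]
which is the source of the four-element minimum in $B(r,r',r'_*)$. Assembling all constants then recovers the claimed bound. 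The main obstacle is precisely this last estimate: extracting the sharp four-element minimum requires carefully pairing the triangle inequalities for both $(v,v_*)$ and $(v',v'_*)$, rather than using either pair alone.
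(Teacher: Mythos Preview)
Your approach is correct and genuinely different from the paper's. The paper (Appendix~\ref{appendix:proof lemma radially symmetric}) proceeds via the Carleman-type identity $\eqref{changeofvariable}$: it passes to the variables $(v',v'_*)$ carrying Dirac masses on the momentum and energy conservation manifolds $C_m$, $C_e$, goes to spherical coordinates, and then invokes the ready-made bound $A(a_1,a_2,a_3,a_4)\leq \frac{16\pi^2}{a_1a_2a_3a_4}\min_i a_i$ from \cite[Step 3, Lemma 4.6]{GMM} for the triple angular integral of $\delta_{C_m}$; the four-element minimum thus enters as a black box, and the remaining $r_*$-integration is closed by $\delta_{C_e}$. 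You instead stay in the strong $\sigma$-parametrisation, compute the Jacobian chain $\phi\to r'$, $c\to$ (closed via $\int_{c_1}^{c_2}\!dc/\sqrt{(c-c_1)(c_2-c)}=\pi$), $r_*\to r'_*$, and isolate a residual $\rho$-integral whose antiderivative is $-\tfrac{M^2}{m_im_j}\sqrt{E}$; the four-element minimum then emerges transparently from pairing the two triangle inequalities $(A+B)-|A-B|=2\min(A,B)$ at the endpoints. Your route is more computational but entirely self-contained (no appeal to the $A$-function lemma), and it appears to yield the constant $8\pi^2$ rather than $16\pi^2$, which is harmless for the threshold $k_0$. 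Two small points worth making explicit when you write it up: (i) the factorisation leading coefficient $X+S^2=4(m_j/M)^2\rho^2E$ follows from $E=|Mv-m_jV|^2/M^2$, and (ii) the admissible $\rho$-range is the intersection $[\max\{|r-r_*|,|r'-r'_*|\},\min\{r+r_*,r'+r'_*\}]$, which is what produces both the $\min$ of sums and the $\max$ of differences in your endpoint evaluation.
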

\bigskip

The proof of Lemma \ref{lem:radiallysym} is technical and closely follows the one of \cite[Lemma 4.6]{GMM}. We therefore leave it in Appendix \ref{appendix:proof lemma radially symmetric} for the sake of completeness and also because the case $m_i\neq m_j$ has to be handled carefully as we need to track down the constants precisely. We furthermore emphasize that dealing with more general kernels is delicate as, for instance, one has to decompose between large and small relative velocities to control terms like $\abs{v-v_*}^{1-\gamma}$, which disappear in the case of hard spheres $\gamma=1$ \cite{GMM}.
\par The previous lemma can be directly used to obtain a control on the full non-linear operator. This is a well-known result for the mono-species Boltzmann equation that the control on the non-linear operator implies a loss of weight $\nu(v)$ in the case of a polynomial weight and $\nu(v)^{1-c}$, with $c>0$, in the exponential case (see for instance \cite[Lemma 5.16]{GMM}). We prove here a similar result for the multi-species operator.

\bigskip
\begin{lemma}\label{lem:control Q}
Let $w_i=e^{\kappa_1(\sqrt{m_i}\abs{v})^{\kappa_2}}$ with $\kappa_1 >0$ and $\kappa_2$ in $(0,2)$ or $w_i=\langle \sqrt{m_i}v \rangle^k$ with $k> 5+\gamma$,  there exists $C_Q >0$ such that for  every $\mathbf{f}$
$$\sum\limits_{i=1}^N\norm{Q_i(\mathbf{f},\mathbf{g})}_{L^\infty_{x,v}\pa{w_i\nu_i^{-1+c(\mathbf{w})}}} \leq C_Q \norm{\mathbf{f}}_{L^\infty_{x,v}(\mathbf{w})}\norm{\mathbf{g}}_{L^\infty_{x,v}(\mathbf{w})},$$
The constant $C_Q$ is explicit and depends only on $w$, $N$, the masses $m_i$ and the kernels of the collision operator. The power $c(\mathbf{w})$ is zero when $\mathbf{w}$ is polynomial and can be taken equal to $\kappa_2'/\gamma$ for any $0\leq \kappa_2'<\kappa_2$ for $\mathbf{w}$ exponential.
\end{lemma}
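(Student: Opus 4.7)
The plan is to split each component as
$$Q_i(\mathbf{f}, \mathbf{g}) = \sum_{j=1}^N \cro{Q_{ij}^+(f_i, g_j) - f_i(v)\, \nu_{ij}[g_j](v)},\qquad \nu_{ij}[h](v) := \int_{\R^3 \times \S^2} B_{ij}\, h(v_*)\, d\sigma dv_*,$$
and estimate the gain and loss pieces separately. For each of them, pulling out the weighted $L^\infty$ norms of $\mathbf{f}$ and $\mathbf{g}$ reduces the task to pointwise upper bounds on $w_i^{-1}(v)\, \nu_{ij}[w_j^{-1}](v)$ and on $Q_{ij}^+(w_i^{-1}, w_j^{-1})(v)$, because $w_i$ and $w_j$ are radially symmetric weights.

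The loss term is direct. Using $|v-v_*|^\gamma \leq 2^{\gamma}\pa{\langle v \rangle^\gamma + \langle v_*\rangle^\gamma}$ together with the integrability of $\langle v_*\rangle^\gamma w_j^{-1}(v_*)$ (ensured by $k > 3+\gamma$ in the polynomial case and automatic in the stretched-exponential case), one gets $\nu_{ij}[w_j^{-1}](v) \leq C \langle v\rangle^\gamma$, and the equivalence $\nu_i(v) \sim \langle v\rangle^\gamma$ of \eqref{nui bound} turns this into the target bound with $c(\mathbf{w})=0$.

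For the gain term, the pointwise estimate $|Q_{ij}^+(f_i, g_j)(v)| \leq \norm{f_i w_i}_\infty \norm{g_j w_j}_\infty Q_{ij}^+(w_i^{-1}, w_j^{-1})(v)$ allows one to apply Lemma \ref{lem:radiallysym} to radial test functions $w_i^{-1}$ and $w_j^{-1}$, reducing to a two-dimensional integral on $(r',r_*')$ subject to the constraint $m_i(r')^2 + m_j(r_*')^2 \geq m_i r^2$. In the polynomial case I would decompose the constraint set as $\br{r' \geq r/\sqrt 2} \cup \br{r_*' \geq r\sqrt{m_i/(2m_j)}}$, since both regions cover it. On each piece one factor of the form $\langle \sqrt{m_i} r'\rangle^{-k}$ or $\langle \sqrt{m_j} r_*'\rangle^{-k}$ dominates $\langle \sqrt{m_i} r\rangle^{-k}$ up to a constant, extracting $w_i^{-1}(v)$. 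Only a portion of the weight is spent this way, the remaining $\langle \sqrt{m_i} r'\rangle^{-\alpha}\langle \sqrt{m_j} r_*'\rangle^{-k}$ ensuring integrability and producing a factor $\langle v\rangle^\gamma$ after one uses the min-structure of $B(r, r', r_*')$ from Lemma \ref{lem:radiallysym} (say $\min\{\cdots\} \leq m_j r_*'$ to kill the singularity). The condition $k > 5+\gamma$ is precisely what makes the resulting double integral converge. In the exponential case the crucial observation is the subadditivity of $x \mapsto x^{\kappa_2/2}$ on $\R^+$ (valid since $\kappa_2 < 2$): on the constraint one has
$$w_i^{-1}(r')\, w_j^{-1}(r_*') \leq e^{-\kappa_1 (m_i(r')^2 + m_j(r_*')^2)^{\kappa_2/2}} \leq w_i^{-1}(v),$$
extracting $w_i^{-1}(v)$ cleanly. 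The leftover excess $e^{-\kappa_1 E(r,r',r_*')}$, where $E\geq 0$ measures the slack in the subadditive inequality, dominates the polynomial growth of the kernel $B(r,r',r_*')$ and, after integration, yields a factor $\langle v \rangle^{\gamma - \kappa_2'}$ for any $\kappa_2' < \kappa_2$, which is exactly $\nu_i(v)^{1 - \kappa_2'/\gamma}$.

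The main obstacle I expect is the treatment of the singular factor $|r'-r_*'|^{\gamma-1}$ in $B(r,r',r_*')$ when $\gamma \in [0,1)$, which is invisible in the hard-sphere setting of \cite{GMM} but here forces an additional decomposition of the $(r',r_*')$-plane into small versus large relative velocity $|r'-r_*'| \leq 1$ and $|r'-r_*'| \geq 1$, combined with the flexibility of bounding the minimum factor by any of its arguments. A secondary but delicate point is the mass bookkeeping: one has to verify that recombining $\langle \sqrt{m_i}\cdot\rangle^{-k}$ and $\langle \sqrt{m_j}\cdot\rangle^{-k}$ factors (or the two stretched exponentials, respectively) always produces a single $w_i^{-1}(v)$ with the index of the species under estimation, so that summation over $j$ yields the vector-valued norm $\sum_i \norm{Q_i(\mathbf{f},\mathbf{g})}_{L^\infty_{x,v}(w_i \nu_i^{-1+c(\mathbf{w})})}$ appearing in the statement.
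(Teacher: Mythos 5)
Your high-level plan is identical to the paper's: split gain and loss, apply Lemma~\ref{lem:radiallysym} to $Q_{ij}^+(w_i^{-1},w_j^{-1})$, cover the constraint set by $\{r'\geq r/\sqrt{2}\}\cup\{r'_*\geq r\sqrt{m_i/(2m_j)}\}$, exploit the $\min$-structure of the kernel $B$ to cancel the $1/r$ singularity, and use subadditivity of $x\mapsto x^{\kappa_2/2}$ in the exponential case. There is, however, a real gap in your treatment of the exponential weights. You assert that extracting $w_i^{-1}(v)$ ``cleanly'' via
$$w_i^{-1}(r')\,w_j^{-1}(r'_*)\leq e^{-\kappa_1(m_i(r')^2+m_j(r'_*)^2)^{\kappa_2/2}}\leq w_i^{-1}(v)$$
leaves a ``leftover excess $e^{-\kappa_1 E}$'' with $E\geq 0$ that ``dominates the polynomial growth'' of $B$. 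But $E\geq 0$ alone only gives $e^{-\kappa_1 E}\leq 1$, which provides no decay for the $(r',r'_*)$-integral. Worse, when $r'_*$ stays bounded while $r'$ grows (a regime the constraint $m_i(r')^2+m_j(r'_*)^2\geq m_ir^2$ allows), the excess $E=(m_i(r')^2)^{\kappa_2/2}+(m_j(r'_*)^2)^{\kappa_2/2}-(m_i(r')^2+m_j(r'_*)^2)^{\kappa_2/2}$ is of order $(m_j(r'_*)^2)^{\kappa_2/2}$, i.e.\ bounded: it does not control polynomial growth in the large variable. What closes the argument — and what the paper does, after symmetrising to $r'\leq r'_*$ and passing to $\rho=\sqrt{(r')^2+(r'_*)^2}$ — is the \emph{quantitative} subadditivity $(r')^{\kappa_2}+(r'_*)^{\kappa_2}\geq\rho^{\kappa_2}+\eta\,(r')^{\kappa_2}$ with $\eta=\eta(\kappa_2)>0$. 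The extra factor $e^{-\kappa_1\eta(r')^{\kappa_2}}$ renders the inner integral over $a=r'/\rho\in[0,1/\sqrt 2]$ bounded uniformly in $\rho$ (despite the $\abs{m_i^{-1/2}a-m_j^{-1/2}\sqrt{1-a^2}}^{-(1-\gamma)}$ singularity and the mass mismatch), and the remaining $\rho$-integral of $\rho^{\kappa_2-\kappa_2'+\gamma}e^{-\kappa_1\rho^{\kappa_2}}$ over $\rho\geq\sqrt{m_i}r$ produces, by iterated integration by parts, both the $w_i^{-1}(v)$ factor and $\abs{v}^{\gamma-\kappa_2'}\sim\nu_i(v)^{1-\kappa_2'/\gamma}$. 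Without this refinement the step ``extract $w_i^{-1}(v)$ cleanly, use the excess for integrability'' is circular.

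Two smaller points. The paper treats $\abs{v}\leq 1$ by a separate crude bound on $Q^+_{ij}$ avoiding Lemma~\ref{lem:radiallysym} altogether (Step~2 of the proof), rather than arguing through the $1/r$-singular kernel; you should at least acknowledge this regime. And the threshold $k>5+\gamma$ is not ``precisely'' the integrability threshold of the double integral in Step~3.1 (which already converges for smaller $k$); it is the value chosen for compatibility with Lemma~\ref{lem:control A} and the rest of the argument.
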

\bigskip

\begin{proof}[Proof of Lemma \ref{lem:control Q}]
We remind the definition
$$Q_i(\mathbf{f},\mathbf{g}) = \sum\limits_{j=1}^N Q_{ij}(f_i,g_j).$$
It is therefore enough to prove the estimate for $Q_{ij}(f_i,g_j)$.
\par Firstly, since the collision kernels satisfy the Grad's cutoff assumption we can decompose $Q_{ij}$ into a loss and a gain term as
$$Q_{ij}(f_i,g_j) = C_{ij}^\Phi\cro{Q^+_{ij}(f_i,g_j) - Q^-_{ij}(f_i,g_j)}$$
with $Q^+_{ij}$ has been defined in Lemma \ref{lem:radiallysym} and
$$Q^-_{ij}(f_i,g_j)(v) = \pa{\int_{\R^3\times\S^2} \abs{v-v_*}^\gamma b_{ij}(\cos \theta)g_j^*\:d\sigma dv_*}f_i(v).$$
In this proof we will denote by $C$ any positive constant independent of $\mathbf{f}$, $\mathbf{g}$ and $v$. We shall bound the gain and the loss terms separately.

\bigskip
\textbf{Step 1: Estimate on the loss part.} This operator can be controlled rather easily. Indeed,
$$\abs{w_i\nu_i^{-1}(v)Q^-_{ij}(f_i,g_j)} \leq \norm{f_i}_{L^\infty_{x,v}(w_i)}\norm{g_j}_{L^\infty_{x,v}(w_j)} \nu_i^{-1}(v)\pa{l_{b_{ij}}\int_{\R^3}\frac{\abs{v-v_*}^\gamma}{w_j(v_*)}\:dv_*}.$$
To conclude the estimate on the loss term, we bound $\abs{v-v_*}^\gamma$ by $\langle v \rangle^\gamma \langle v_* \rangle^\gamma$ and we remind that $\langle v \rangle^\gamma \leq C\nu_i(v)$. The remaining integral in $v_*$ is finite due to the weights $w_j(v_*)$ considered here.

\bigskip
\textbf{Step 2: Estimate on the gain part for $\abs{v}\leq 1$.} Bounding crudely the gain term for $\abs{v}\leq 1$ we get
\begin{equation*}
\begin{split}
\abs{w_i\nu_i^{-1}(v)Q^+_{ij}(f_i,f_j)} \leq b^\infty_{ij}&w_i(1)\nu_i(v)^{-1}\pa{\int_{\R^3\times\S^2}\frac{\abs{v-v_*}^\gamma}{w_i(v')w_j(v'_*)}\:d\sigma dv_*}
\\&\times \norm{\mathbf{f}}_{L^\infty_{x,v}(\mathbf{w})}\norm{\mathbf{g}}_{L^\infty_{x,v}(\mathbf{w})}.
\end{split}
\end{equation*}
First we bound as before $\abs{v-v_*}^\gamma$ by $C\nu_i(v)\langle v_* \rangle^\gamma$. Then we use the energy conservation of elastic collisions to see that $m_i\abs{v'}^2 + m_j\abs{v'_*}^2 \geq m_j \abs{v_*}^2$ and therefore that $\abs{v'}\geq \sqrt{m_j/(2m_i)}\abs{v_*}$ or $\abs{v'_*} \geq \sqrt{1/2}\abs{v_*}$. Since $v \mapsto 1/w_{i/j}(\abs{v})$ is decreasing and also that $w(\abs{v}) \geq 1$ it follows that the following always holds
\begin{equation}\label{wiwj}
\frac{\abs{v-v_*}^\gamma}{w_i(v')w_j(v'_*)} \leq C \nu_i(v)\frac{\langle v_* \rangle^\gamma}{w\pa{\min\br{\sqrt{\frac{m_i}{2}},\sqrt{\frac{m_j}{2}}}v_*}}.
\end{equation}
We infer
\begin{equation}\label{control Q v leq 1}
\forall \abs{v}\leq 1, \: \abs{w_i\nu_i^{-1}(v)Q^+_{ij}(f_i,g_j)} \leq C\max\limits_{1\leq i \leq N}\pa{\int_{\R^3}\frac{\langle v_*\rangle^\gamma}{w(\sqrt{\frac{m_i}{2}}v_*)}\:dv_*}\norm{\mathbf{f}}_{L^\infty_{x,v}(\mathbf{w})}\norm{\mathbf{g}}_{L^\infty_{x,v}(\mathbf{w})}.
\end{equation}

\bigskip
\textbf{Step 3.1: Estimate on the gain part for $\abs{v}\geq 1$  when $w_i$ is poynomial.} Bounding $f_i$ and $f_j$ by their $L^\infty_{x,v}(\mathbf{w})$-norm and since $\mathbf{w}(v)=\mathbf{w}(\abs{v})$ we use Lemma \ref{lem:radiallysym} and obtain
\begin{equation}\label{Qij+ v geq 1 first bound}
\abs{w_i\nu_i^{-1}(v)Q^+_{ij}(f_i,f_j)} \leq C w_i(v)\nu_i(v)^{-1} I(\abs{v})\norm{\mathbf{f}}_{L^\infty_{x,v}(\mathbf{w})}\norm{\mathbf{g}}_{L^\infty_{x,v}(\mathbf{w})}
\end{equation}
with
\begin{equation}\label{I}
I(r) = \int_0^{+\infty} \int_0^{+\infty} \mathbf{1}_{A_{ij}(r)}\frac{r'r'_*}{r\abs{r'-r'_*}^{1-\gamma}w_i(r')w_j(r'_*)}\min\br{m_ir,m_jr_*,m_ir',m_jr'_*}dr'dr'_*
\end{equation}
and $$A_{ij}(r) = \br{mi(r')^2+m_j(r'_*)^2\geq m_i r^2} \subset \br{r'\geq \frac{1}{\sqrt{2}}r} \cup \br{r'_*\geq \sqrt{\frac{m_i}{2m_j}r}}.$$

\par We decompose $I(r)$ into two integrals. The first one when $r'\geq r/\sqrt{2}$ and $r'_*\geq 0$, on which we bound $\min\{m_ir,m_jr_*,m_ir',m_jr'_*\}$ by $m_jr'_*$ and the second one when $r'_* \geq \sqrt{m_i/2m_j}r$ and $r'\geq 0$, on which we bound the minimum by $m_ir'$. This yields
\begin{equation*}
\begin{split}
I(r)\leq& \frac{C}{r} \int_{\frac{1}{\sqrt{2}}r}^{+\infty}\frac{r'}{w_i(r')}\pa{\int_0^{+\infty} \frac{(r'_*)^2}{\abs{r'-r'_*}^{1-\gamma}w_j(r'_*)}dr'_*}dr'
\\&+\frac{C}{r} \int_{\frac{\sqrt{m_i}}{\sqrt{2m_j}}r}^{+\infty}\frac{r'_*}{w_j(r'_*)}\pa{\int_0^{+\infty} \frac{(r')^2}{\abs{r'-r'_*}^{1-\gamma}w_i(r')}dr'}dr'_*
\\\leq& \frac{C}{r} \int_{ar}^{+\infty}\frac{r'}{(1+(r')^2)^{k/2}}\pa{\int_0^{+\infty} \frac{(r'_*)^2}{\abs{r'-r'_*}^{1-\gamma}(1+(r'_*)^2)^{k/2}}dr'_*}dr',
\end{split}
\end{equation*}
where we defined $a = \min\br{\sqrt{1/2},\sqrt{m_i/2m_j}}$ and we used that there exists $C>0$ such that for all $i$, $w_i(r)^{-1} \leq C/(1+r^2)^{k/2}$. Finally, we define $R = ar/2$ and we decompose the integral in $r'_*$ into the part where $\abs{r'-r'_*}\leq R$ and the part where $\abs{r'-r'_*}\geq R$. This yields
\begin{equation}\label{important bound I}
\begin{split}
I(r) \leq& \frac{C}{r}\int_{ar}^{+\infty}\frac{r'}{(1+(r')^2)^{k/2}}\pa{\int_{r'-R}^{r'+R} \frac{(r'_*)^2}{\abs{r'-r'_*}^{1-\gamma}(1+(r'_*)^2)^{k/2}}dr'_*}dr'
\\&+ \frac{C}{rR^{1-\gamma}}\pa{\int_{ar}^{+\infty}\frac{r'}{(1+(r')^2)^{k/2}}dr'}\pa{\int_0^{+\infty} \frac{(r'_*)^2}{(1+(r'_*)^2)^{k/2}}dr'_*}.
\end{split}
\end{equation}

\bigskip
Since for all $p>0$, $1/(1+x^2)^{p}$ is decreasing for $x\geq 0$ and since $r'_* \geq r'-R\geq ar-R\geq 0$, in the first term on the right-hand side of $\eqref{important bound I}$ we bound $(r'_*)^2/(1+(r'_*)^2)^{k/2}$ first by $C/(1+(r'_*)^2)^{(k-2)/2}$ and then by $1/(1+(ar-R)^2)^{(k-2)/2}$. We compute directly the second term on the right-hand side of $\eqref{important bound I}$.
\begin{eqnarray*}
I(r) &\leq& \frac{C}{r(1+(ar-R)^2)^{(k-2)/2}}\pa{\int_{ar}^{+\infty}\frac{r'}{(1+(r')^2)^{k/2}}dr'}\pa{\int_{-R}^R \frac{dr'_*}{\abs{r'_*}^{1-\gamma}}} 
\\&\quad&+ \frac{C}{rR^{1-\gamma}}\pa{\int_{ar}^{+\infty}\frac{r'}{(1+(r')^2)^{k/2}}dr'}\pa{\int_{0}^{+\infty} \frac{(r'_*)^2}{(1+(r'_*)^2)^{k/2}}dr'_*} \nonumber
\\&\leq& C R^\gamma\cro{\frac{1}{r(1+(ar-R)^2)^{(k-2)/2}(1+(ar)^2)^{(k-2)/2}} + \frac{1}{rR(1+(ar)^2)^{(k-2)/2}}}. \nonumber
\end{eqnarray*}
Now we use $R=ar/2$, the fact that $r^\gamma \leq C \nu_i(v)$ and the  inequality $r^2(1+(ar)^2)^p \geq C (1+r^2)^{p+1}$ for $a$ and $p$ nonnegative and $r\geq 1$. We deduce
$$I(\abs{v}) \leq C \frac{\nu_i(v)}{w_i(v)}\cro{\frac{1}{(1+r^2)^{(k-2)/2}}+1},$$
and therefore, with $\eqref{Qij+ v geq 1 first bound}$ we obtain
\begin{equation}\label{Qij+ polynomial}
\forall \abs{v}\geq 1,\quad \abs{w_i\nu_i^{-1}(v)Q^+_{ij}(f_i,g_j)} \leq C\norm{\mathbf{f}}_{L^{\infty}_{x,v}(\mathbf{w})}\norm{\mathbf{g}}_{L^\infty_{x,v}(\mathbf{w})}.
\end{equation}

\bigskip
\textbf{Step 3.2: Estimate on the gain part for $\abs{v}\geq 1$  when $w_i$ is exponential.}
Let $0 < \kappa_2' < \kappa_2$. We start with $\eqref{Qij+ v geq 1 first bound}$ that is
\begin{equation}\label{controlQ+ expo start}
\abs{w_i\nu_i^{-1+\frac{\kappa_2'}{\gamma}}(v)Q^+_{ij}(f_i,g_j)} \leq C w_i(v)\nu_i(v)^{-1+\frac{\kappa_2'}{\gamma}} I(\abs{v})\norm{\mathbf{f}}_{L^\infty_{x,v}(\mathbf{w})}\norm{\mathbf{g}}_{L^\infty_{x,v}(\mathbf{w})}
\end{equation}
where $I(r)$ is defined in $\eqref{I}$. We make the following change of variable $(r',r'_*) \mapsto (r'/\sqrt{m_i},r'_*/\sqrt{m_j})$ which yields
\begin{equation*}
\begin{split}
I(r) \leq  C \int_0^{+\infty} \int_0^{+\infty}& \mathbf{1}_{\br{(r')^2+(r'_*)^2 \geq m_i r^2}}\frac{r'r'_*\min\br{r',r'_*}}{r\abs{m_i^{-1/2}r'-m_j^{-1/2}r'_*}^{1-\gamma}}e^{-\kappa_1((r')^{\kappa_2}+(r'_*)^{\kappa_2})} dr'dr'_*.
\end{split}
\end{equation*}
Note that we bounded the minimum of four terms by the minimum of two terms, also we bounded the masses by their maximum.
\par We decompose the integral on the right-hand side into $\br{r' \geq r'_*}$, on which we bound the minimum by $r'_*$, and $\br{r'_* \geq r'}$, where the minimum is bounded from above by $r'$. The two integrals are equal after the relabelling $(r',r'_*)$ into $(r'_*,r')$ and thus
$$I(r) \leq C \int_0^{+\infty} \int_{r'}^{+\infty}  \mathbf{1}_{\br{(r')^2+(r'_*)^2 \geq m_i r^2}}\frac{(r')^2 r'_*}{r\abs{m_i^{-1/2}r'-m_j^{-1/2}r'_*}^{1-\gamma}}e^{-\kappa_1((r')^{\kappa_2}+(r'_*)^{\kappa_2})} dr'_*dr'.$$
From \cite[Proof of Lemma 4.10]{GMM}, if we denote $\rho = \sqrt{(r')^2+(r'_*)^2}$ we note that $r'_* \geq r'$ implies $r' \leq \rho /\sqrt{2}$ and hence the following upper bound 
$$e^{-\kappa_1((r')^{\kappa_2}+(r'_*)^{\kappa_2})} \leq e^{-\kappa_1\rho^{\kappa_2}}e^{-\kappa_1 \eta (r')^{\kappa_2}},$$
where $\eta$ only depends on $\kappa_2$. We make the change of variable $(r',r'_*)\mapsto (r',\rho)$ (recall $r' \leq \rho /\sqrt{2}$) 
$$I(r) \leq \frac{C}{r} \int_{\sqrt{m_i}r}^{+\infty} \rho e^{-\kappa_1 \rho^{\kappa_2}} \int_0^{\frac{\rho}{\sqrt{2}}} e^{-\kappa_1\eta(r')^{\kappa_2}} \frac{(r')^2}{\abs{m_i^{-1/2}r' -m_j^{-1/2}\sqrt{\rho^2-(r')^2}}^{1-\gamma}}\:dr'd\rho,$$
which we can rewrite thanks to the change of variable $r' \mapsto a\rho$, with $a$ belonging to $[0,1/\sqrt{2}]$:
$$I(r) \leq \frac{C}{r} \int_{\sqrt{m_i}r}^{+\infty} \rho^{\kappa_2-\kappa_2'+\gamma} e^{-\kappa_1 \rho^{\kappa_2}} \int_0^{\frac{1}{\sqrt{2}}} e^{-\kappa_1\eta(a\rho)^{\kappa_2}} \frac{\rho^{3-(\kappa_2-\kappa_2')} a^2}{\abs{m_i^{-1/2}a -m_j^{-1/2}\sqrt{1-a^2}}^{1-\gamma}}\:dad\rho.$$
\par For any $0<a<1/\sqrt{2}$, the following holds
$$\rho^{3-(\kappa_2-\kappa_2')} e^{-\kappa_1\eta(a\rho)^{\kappa_2}}  \leq \frac{C}{a^{3-(\kappa_2-\kappa_{2'})}}e^{-\frac{\kappa_1\eta}{2} (\sqrt{m_i}a)^{\kappa_2}}.$$ 
The integrand in $a$ variable is therefore uniformly bounded in $\rho$ by an integrable function (because $0\leq 1-\gamma <1$ and $-1 < 1-(\kappa_2-\kappa_2') < 1$). Hence, by integrating by part,
\begin{eqnarray*}
I(r) &\leq& \frac{C}{r}\int_{\sqrt{m_i}r}^{+\infty} \rho^{\kappa_2-\kappa_2'+\gamma} e^{-\kappa_1 \rho^{\kappa_2}}\:d\rho = \frac{C}{r}\int_{\sqrt{m_i}r}^{+\infty} \rho^{1+\gamma-\kappa'_2}\rho^{\kappa_2-1} e^{-\kappa_1 \rho^{\kappa_2}}\:d\rho
\\ &\leq& Cr^{\gamma-\kappa'_2}e^{-\kappa_1 (\sqrt{m_i}r)^{\kappa_2}} + C\int_{\sqrt{m_i}r}^{+\infty}\rho^{\gamma-\kappa_2'}e^{-\kappa_1\rho^{\kappa_2}}\:d\rho.
\end{eqnarray*}
To conclude we integrate by part inductively until the power of $\rho$ is negative inside the integral. For $0 \leq b \leq \gamma-\kappa_2'$ we have $r^{b} \leq r^{\gamma-\kappa_2'}$ since $r\geq 1$. Recalling that $\nu_i(v) \sim (1+\abs{v}^2)^{\gamma/2}$ (see $\eqref{nui bound}$) it follows that for $\abs{v} \geq 1$,
$$I(\abs{v}) \leq C w_i(v)^{-1}\abs{v}^{\gamma-\kappa_2'} \leq C w_i(v)^{-1}\nu_i(v)^{1-\frac{\kappa_2'}{\gamma}}.$$
Plugging the above into $\eqref{controlQ+ expo start}$ terminates the proof.
\end{proof}
\bigskip

We conclude this section with an explicit control for the operator $\mathbf{B^{(\delta)}}$.

\bigskip
\begin{lemma}\label{lem:control B}
Let $w_i=e^{\kappa_1(\sqrt{m_i}\abs{v})^{\kappa_2}}$ with $\kappa_1 >0$ and $\kappa_2$ in $(0,2)$ or $w_i=\langle \sqrt{m_i}v \rangle^k$ with $k> k_0$. Take $\delta$ be in $(0,1)$. Then for all $\mathbf{f}$ in $L^\infty_{x,v}\pa{w\boldsymbol\nu}$ and all $i$ in $\br{1,\dots,N}$,
$$\sum\limits_{i=1}^N\norm{B^{(\delta)}_i(\mathbf{f})}_{L^\infty_{x,v}\pa{w_i\nu_i^{-1}}} \leq C_B(\mathbf{w},\delta)\norm{\mathbf{f}}_{L^\infty_{x,v}(\mathbf{w})}.$$
Moreover we have the following formula
$$C_B(\mathbf{w},\delta) = C_B(\mathbf{w}) + \eps_w(\delta)$$
where $\eps_w(\delta)$ is an explicit function depending on $w$ that tends to $0$ as $\delta$ tends to $0$ and
\begin{itemize}
\item[(i)] in the case $w_i=e^{\kappa_1\abs{\sqrt{m_i}v}^{\kappa_2}}$: $C_B(\mathbf{w}) = 0$; 
\item[(ii)] in the case $w_i=\langle \sqrt{m_i}v \rangle^k$:
$$C_B(\mathbf{w}) = \frac{4\pi}{k-1-\gamma}\max\limits_{1\leq i \leq N}\cro{\pa{\sum\limits_{j=1}^NC^\Phi_{ij}b_{ij}^\infty\frac{(m_i+m_j)^2}{m_i^{2-\frac{\gamma}{2}}m_j^{\frac{5+\gamma}{2}}}}\pa{\sum\limits_{1\leq k \leq N}\frac{\sqrt{m_k}}{C^\Phi_{ik}l_{b_{ik}}}}}.$$
\end{itemize}
\end{lemma}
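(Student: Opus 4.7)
The plan is to decompose $B^{(\delta)}_i(\mathbf{f}) = G^{(\delta)}_{1,i} + G^{(\delta)}_{2,i} - G^{(\delta)}_{3,i}$ according to the three terms in \eqref{Bidelta} (the $h_i'$-gain, the $h_j^{'*}$-gain and the $h_j^*$-loss), and to control each in $L^\infty_{x,v}(w_i\nu_i^{-1})$. The loss contribution factorises as $G^{(\delta)}_{3,i}(v) = \mu_i(v)\sum_j C^\Phi_{ij}\int(1-\Theta_\delta) b_{ij}\abs{v-v_*}^\gamma f_j^*\,d\sigma dv_*$. Bounding $\abs{f_j^*}\leq w_j(v_*)^{-1}\norm{\mathbf{f}}_{L^\infty_{x,v}(\mathbf{w})}$ and using that $w_i(v)\mu_i(v)$ decays (stretched-)exponentially while the remaining integral grows only polynomially, this contribution is absorbed entirely into $\eps_w(\delta)$: on bounded $\abs{v}$ by dominated convergence on the shrinking support of $(1-\Theta_\delta)$, and on large $\abs{v}$ by the Maxwellian smallness of $w_i\mu_i$.

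For the gain terms $G^{(\delta)}_{1,i}$ and $G^{(\delta)}_{2,i}$, I bound pointwise $\abs{f_i(v')}\leq w_i(v')^{-1}\norm{\mathbf{f}}_{L^\infty_{x,v}(\mathbf{w})}$ (resp.\ $\abs{f_j(v'_*)}\leq w_j(v'_*)^{-1}\norm{\mathbf{f}}_{L^\infty_{x,v}(\mathbf{w})}$). Since $w_i^{-1}$, $w_j^{-1}$, $\mu_i$ and $\mu_j$ are all radial, Lemma \ref{lem:radiallysym} applies and reduces each gain term to a double integral in $(r',r'_*)$ controlled by the explicit kernel $B(r,r',r'_*)$. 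I then split according to $\abs{v}$: for $\abs{v}\leq 2\delta^{-1}$, restrict the integral to the shrinking support of $(1-\Theta_\delta)$, giving an $\eps_w(\delta)$ contribution; for $\abs{v}\geq 2\delta^{-1}$ one has $\Theta_\delta\equiv 0$, so $(1-\Theta_\delta)=1$ and the full Lemma \ref{lem:radiallysym} bound applies, whose tail behaviour as $\abs{v}\to\infty$ yields the leading constant $C_B(\mathbf{w})$.

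In the stretched-exponential case, $w_i^{-1}$ and $\mu_i$ both decay faster than any polynomial, so after the Lemma \ref{lem:radiallysym} bound and division by $\nu_i(v)\sim\langle v\rangle^\gamma$ the tail contribution vanishes uniformly as $\abs{v}\to\infty$, which together with the bounded-$\abs{v}$ estimate leaves only $\eps_w(\delta)$ and hence $C_B(\mathbf{w})=0$. In the polynomial case $w_i=\langle\sqrt{m_i}v\rangle^k$ I mimic Step 3.1 of the proof of Lemma \ref{lem:control Q}: split $A_{ij}(r)$ into $\{r'\geq r/\sqrt{2}\}$ and $\{r'_*\geq\sqrt{m_i/(2m_j)}\,r\}$, bound $\min\{m_ir,m_jr_*,m_ir',m_jr'_*\}$ by $m_jr'_*$ or $m_ir'$ respectively, handle the $\abs{r'-r'_*}^{\gamma-1}$ singularity by splitting $\abs{r'-r'_*}\lessgtr R$ with $R\sim r/2$, rescale $r'_*\mapsto\sqrt{m_j}r'_*$ in the $\mu_j$-integral, and compute the resulting integrals. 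The outer $r'$-integral $\int_{ar}^{\infty}r'\langle\sqrt{m_i}r'\rangle^{-k}\,dr'\sim r^{2-k}/m_i^{k/2}$ combines with the $r^\gamma$ absorbed by $\nu_i^{-1}$ and the $m_j$-factors from the Maxwellian and from rescaling $\abs{v-v_*}^\gamma$ to yield the asymmetric mass exponents $m_i^{2-\gamma/2}$ and $m_j^{(5+\gamma)/2}$, together with the prefactor $4\pi/(k-1-\gamma)$ arising from combining the polynomial $r'$-integration with the cancellation of $r^\gamma$ in the denominator.

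Finally, the second bracket $\sum_k\sqrt{m_k}/(C^\Phi_{ik}l_{b_{ik}})$ in $C_B(\mathbf{w})$ originates from upper-bounding $\nu_i(v)^{-1}\langle v\rangle^\gamma$ via the explicit lower bound in \eqref{nui bound}, $\nu_i(v)\geq\abs{v}^\gamma\sum_k C^\Phi_{ik}l_{b_{ik}}/\sqrt{m_k}$, combined with the elementary inequality $1/\sum_k a_k\leq\sum_k 1/a_k$ valid for $a_k>0$. The outer sum $\sum_i$ in the norm on the left-hand side is handled by taking the maximum over $i$. The main obstacle is the careful bookkeeping of mass- and species-dependent constants in the polynomial case: the kernel prefactor $(m_i+m_j)^2/(m_im_j^2)$ from Lemma \ref{lem:radiallysym}, the normalisation $(m_j/(2\pi))^{3/2}$ of $\mu_j$, the rescaling Jacobians, the polynomial $r'$-integration against $\langle\sqrt{m_i}r'\rangle^{-k}$, and the inequality applied to $\nu_i^{-1}$ must all be combined consistently to produce the precise exponents and coefficients in $C_B(\mathbf{w})$.
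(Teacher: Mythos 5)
Your overall strategy matches the paper's: split $B_i^{(\delta)}$ into the two gain terms and the loss term, absorb the loss term via the Maxwellian decay of $w_i\mu_i$, and estimate the gain terms by plugging $\abs{f_i}\leq w_i^{-1}\norm{\mathbf{f}}$ into Lemma \ref{lem:radiallysym} and carrying out the explicit tail computations from Step~3.1 of Lemma \ref{lem:control Q}. Your account of the origin of $\sum_k\sqrt{m_k}/(C^\Phi_{ik}l_{b_{ik}})$ via the lower bound \eqref{nui bound} on $\nu_i$ together with $1/\sum_k a_k\leq\sum_k 1/a_k$ is also consistent with what the paper does.

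However, the splitting of $v$-space is where your sketch has a genuine gap. You split at $\abs{v}=2\delta^{-1}$ and claim that on $\abs{v}\leq 2\delta^{-1}$ one gets an $\eps_w(\delta)$ contribution ``by dominated convergence on the shrinking support of $(1-\Theta_\delta)$''. This is false on the annulus $\delta^{-1}<\abs{v}\leq 2\delta^{-1}$: the cutoff $\Theta_\delta$ equals $1$ only on $\br{\abs{v}\leq\delta^{-1},\;2\delta\leq\abs{v-v_*}\leq\delta^{-1},\;\abs{\cos\theta}\leq 1-2\delta}$, so for $\abs{v}>\delta^{-1}$ one has $\Theta_\delta<1$ for \emph{every} $(v_*,\sigma)$ — there is no shrinking support at all — and the contribution from this annulus is of order $1$, not $o_\delta(1)$. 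Even for $\abs{v}$ just below $\delta^{-1}$ the constraint $\abs{v-v_*}>\delta^{-1}$ only forces $\abs{v_*}>\delta^{-1}-\abs{v}$, which does not shrink. The paper's proof avoids this by introducing a truncation radius $R$ that is a priori \emph{independent} of $\delta$ with the compatibility requirement $2R\leq\delta^{-1}$: on $\abs{v}\leq R$ one then has $\mathbf{1}_{\abs{v}\leq R}(1-\Theta_\delta)\leq \mathbf{1}_{\abs{v_*}\geq\delta^{-1}/2}+\mathbf{1}_{\abs{v-v_*}\leq 2\delta}+\mathbf{1}_{\abs{\cos\theta}\geq 1-2\delta}$, which genuinely shrinks, while on $\abs{v}\geq R$ one simply bounds $1-\Theta_\delta\leq 1$ and invokes the tail estimate. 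The resulting error is $\eps_w(R)+w_i(R)\eps_w(\delta)$, so one must let $R\to\infty$ \emph{slowly} relative to $\delta^{-1}$; your choice $R=2\delta^{-1}$ is both too fast and outside the admissible range $R\leq\delta^{-1}/2$. This is a fixable bookkeeping issue, but as written the bounded-$\abs{v}$ step does not close.
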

\bigskip

\begin{remark}\label{rem:CB<1}
We emphasize here that, by definition of $k_0$ given by $\eqref{k0}$, for every choice of weight $w$ considered in this work one has $C_B(\mathbf{w})<1$. We can thus fix $\delta_0$ small enough such that for all $\delta \leq \delta_0$, $C_B = C_B(\mathbf{w},\delta) <1$. In the rest of this article we will assume that we chose $\delta \leq \delta_0$ and for convenience we will drop the exponent and use the following notations: $\mathbf{B}=\mathbf{B^{(\delta)}}$, $\mathbf{A}=\mathbf{A^{(\delta)}}$ and finally $C_B = C_B(\mathbf{w},\delta)$.
\end{remark}
\bigskip

\begin{proof}[Proof of Lemma \ref{lem:control B}]
Following the idea of \cite{GMM} for the mono-species case in the hard spheres model we split the operator $B_i^{(\delta)}$ (recall $\eqref{Bidelta}$) into three pieces.
\begin{equation}\label{Bistart}
\begin{split}
\abs{B_i^{(\delta)}(\mathbf{f})} \leq& \sum\limits_{j=1}^N C_{ij}^{\Phi}b_{ij}^\infty \int_{\R^3\times\S^2} \mathbf{1}_{\abs{v}\geq R} \abs{v-v_*}^\gamma \pa{\mu_j^{'*}\abs{f_i'}+\mu_i'\abs{f_j^{'*}}}\:d\sigma dv_*
\\&+ \sum\limits_{j=1}^N C_{ij}^{\Phi}b_{ij}^\infty \int_{\R^3\times\S^2}\mathbf{1}_{\abs{v}\leq R}\pa{1-\Theta_\delta}\abs{v-v_*}^\gamma \pa{\mu_j^{'*}\abs{f_i'}+\mu_i'\abs{f_j^{'*}}}\:d\sigma dv_*
\\&+ \sum\limits_{j=1}^N C_{ij}^{\Phi}b_{ij}^\infty \int_{\R^3\times\S^2}\pa{1-\Theta_\delta}\abs{v-v_*}^\gamma \mu_i \abs{f_j^*}\:d\sigma dv_*.
\end{split}
\end{equation}
The last two terms are easily handled. Indeed, we use $\abs{v-v_*}^\gamma \leq C \nu_i(v)\langle v_* \rangle^\gamma$ for the last one:
\begin{equation}\label{last 2 terms CV 0}
\begin{split}
w_i(v)&\nu_i^{-1}(v)\int_{\R^3\times\S^2}\pa{1-\Theta_\delta}\abs{v-v_*}^\gamma \mu_i \abs{f_j^*}\:d\sigma dv_* 
\\&\leq C w_i(v)\mu_i(v)\mathbf{1}_{\abs{v}\geq R}\pa{\int_{\R^3\times\S^2}\frac{\langle v_* \rangle^\gamma}{w_j(v_*)}\:dv_*d\sigma} \norm{\mathbf{f}}_{L^\infty_{x,v}(\mathbf{w})}
\\&\quad + C w_i(R) \pa{\int_{\R^3\times\S^2}\mathbf{1}_{\abs{v}\leq R}\pa{1-\Theta_\delta}\frac{\langle v_*\rangle^\gamma}{w_j(v_*)}\:dv_*d\sigma} \norm{\mathbf{f}}_{L^\infty_{x,v}(\mathbf{w})}.
\end{split}
\end{equation}
Then notice that for all $i,\:j$, 
$$\abs{v-v_*}^\gamma\frac{\mu_{i}(v')}{w_{j}(v'_*)} \leq \frac{C}{w_i(v')w_j(v'_*)} \leq C\nu_i(v)\frac{\langle v_* \rangle^\gamma}{w(\min\br{\sqrt{\frac{m_i}{2}}}v)}$$
where we used $\eqref{wiwj}$. As the same holds when exchanging $i$ with $j$ and $v'$ with $v'_*$ we can handle the second term in $\eqref{Bistart}$ by taking the $L^\infty_{x,v}(\mathbf{w})$-norm of $\mathbf{f}$ out:
\begin{equation}\label{last 2 terms CV 1}
\begin{split}
w_i(v)&\nu_i^{-1}(v)\int_{\R^3\times\S^2}\mathbf{1}_{\abs{v}\leq R}\pa{1-\Theta_\delta}\abs{v-v_*}^\gamma \pa{\mu_j^{'*}\abs{f_i'}+\mu_i'\abs{f_j^{'*}}}\:d\sigma dv_* 
\\&\leq C w_i\pa{R}\pa{\int_{\R^3\times\S^2}\mathbf{1}_{\abs{v}\leq R}\pa{1-\Theta_\delta}\frac{\langle v_*\rangle^\gamma}{w(\min\br{\sqrt{m_i/2}}v_*)}\:dv_*d\sigma} \norm{\mathbf{f}}_{L^\infty_{x,v}(\mathbf{w})}.
\end{split}
\end{equation}
One has for $\delta^{-1} \geq 2R$ 
\begin{equation*}
\begin{split}
\mathbf{1}_{\abs{v}\leq R}\pa{1-\Theta_\delta} &\leq \mathbf{1}_{\abs{v}\leq R}\mathbf{1}_{\abs{v-v_*}\geq \delta^{-1}} + \mathbf{1}_{\abs{v-v_*}\leq 2\delta} + \mathbf{1}_{\abs{\cos \theta} \geq 1-2\delta} 
\\&\leq \mathbf{1}_{\abs{v_*}\geq \delta^{-1}/2} + \mathbf{1}_{\abs{v-v_*}\leq 2\delta} + \mathbf{1}_{\abs{\cos \theta} \geq 1-2\delta}.
\end{split}
\end{equation*}
Firstly, $w_i(v)\mu_i(v) \mathbf{1}_{\abs{v}\geq R}$ tends to zero as $R$ goes to infinity. Secondly, the function $\phi: v_* \mapsto\langle v_*\rangle^\gamma/ w(v_*)$ is integrable on $\R^3$ thus the integral of $\mathbf{1}_{\abs{v}\leq R}\pa{1-\Theta_\delta}\phi(v_*)$ tends to zero for fixed $R$ as $\delta$ goes to zero. Therefore, $\eqref{last 2 terms CV 0}$ and $\eqref{last 2 terms CV 1}$ both tend to zero. Hence $\eqref{Bistart}$ becomes
\begin{equation}\label{Bidelta general}
\begin{split}
w_i\nu_i^{-1}\abs{B_i^{(\delta)}(\mathbf{f})} \leq& \pa{\sum\limits_{j=1}^N\norm{\pa{Q^+_{ij}(\mu_i,w_j^{-1})+Q^+_{ij}(w_i^{-1},\mu_j)}\mathbf{1}_{\abs{v}\geq R}}_{L^\infty_{x,v}(w_i\nu_i^{-1})}}\norm{\mathbf{f}}_{L^\infty_{x,v}(\mathbf{w})}
\\&+ \pa{\eps_w(R) + w_i(R)\eps_w(\delta)}\norm{\mathbf{f}}_{L^\infty_{x,v}(\mathbf{w})} .
\end{split}
\end{equation}

\bigskip
The conclusion in the case of $w_i(v)$ being exponential is direct from Lemma \ref{lem:control Q} (more precisely, the control of $Q^+_i$ which is the same as the whole $Q_i$) since  the gain of weight $\nu_i^{c(\mathbf{w})}$ implies that the sum in $\eqref{Bidelta general}$ tends to zero when $R$ tends to infinity.
\par The case where $w_i(v)=\langle \sqrt{m_i}v\rangle^k$ is a bit more computational since the control of the bilinear term $Q_{ij}^+(\mu_i,w_j^{-1})$ does not tend to $0$ as $\abs{v}$ goes to infinity. It requires explicit estimates from Step 3.1 in the proof of Lemma \ref{lem:control Q}. Applying Lemma \ref{lem:radiallysym} we have
\begin{equation*}
\begin{split}
&\abs{Q^+_{ij}(\mu_i,w_j^{-1}) (v)}
\\&\quad\quad= C_0\int_0^{+\infty} \int_0^{+\infty} \mathbf{1}_{A_{ij}(r)}\frac{r'r'_*\mu_i(r')}{r\abs{r'-r'_*}^{1-\gamma}w_j(r'_*)}\min\br{m_ir,m_jr_*,m_ir',m_jr'_*}dr'dr'_*
\end{split}
\end{equation*}
where we recall that 
\begin{equation}\label{def C0}
C_0 = 16\pi^2C^\Phi_{ij} b^\infty_{ij}(m_i+m_j)^2/(m_im_j^2)
\end{equation}
and 
$$A_{ij}(r) = \br{mi(r')^2+m_j(r'_*)^2\geq m_i r^2} \subset \br{r'\geq \sqrt{\eps}r} \cup \br{r'_*\geq \sqrt{\frac{m_i(1-\eps)}{m_j}}\:r},$$
which holds for any $\eps$ in $(0,1)$. As for $\eqref{I}$ we decompose into two integrals on each of the subsets above. On the first one we bound the minimum by $m_jr'_*$ while we bound it by $m_ir'$ on the second set. This yields
\begin{equation*}
\begin{split}
\abs{Q^+_{ij}(\mu_i,w_j^{-1})}&\leq \frac{C_0}{(2\pi)^{3/2}r} \int_{\sqrt{\eps}r}^{+\infty}dr'r'e^{-m_i\frac{(r')^2}{2}}\pa{\int_0^{+\infty} \frac{m_j(r'_*)^2dr'_*}{\abs{r'-r'_*}^{1-\gamma}(1+m_j(r'_*)^2)^{k/2}}}
\\&\:+m_i\frac{C_0}{(2\pi)^{3/2}r} \int_{\sqrt{\frac{m_i(1-\eps)}{m_j}}r}^{+\infty}\frac{r'_*dr'_* }{(1+m_j(r'_*)^2)^{k/2}}\pa{\int_0^{+\infty} \frac{(r')^2e^{-m_i\frac{(r')^2}{2}}}{\abs{r'-r'_*}^{1-\gamma}}dr'}
\\&= I_1^{(\eps)} + I_2^{(\eps)}.
\end{split}
\end{equation*}
$I_1^{(\eps)}$ is easily dealt with using the same techniques as for $\eqref{important bound I}$ with $R = \sqrt{\eps}r/2$. We infer that there exists $C>0$ independent of $r$ and $\eps$ such that
\begin{eqnarray*}
I_1^{(\eps)} &\leq& C\pa{\frac{1}{(1+\eps r^2)^{(k-2)/2}} + \frac{1}{\eps^{(1-\gamma)/2}r^{(1+\gamma)/2}}}\int_{\sqrt{\eps}r}^{+\infty}r'e^{-m_i\frac{(r')^2}{2}}\:dr'
\\&\leq& C\pa{\frac{1}{(1+\eps r^2)^{(k-2)/2}} + \frac{1}{\eps^{(1-\gamma)/2}r^{(1+\gamma)/2}}}e^{-m_i\frac{\eps r^2}{2}}
\end{eqnarray*}
which implies, for any $\eps >0$,
\begin{equation}\label{bound I1}
\lim\limits_{R\to 0}\norm{I_1^{(\eps)}(v)\mathbf{1}_{\abs{v}\geq R}}_{L^\infty_v(w_i\nu_i^{-1})} = 0
\end{equation}
\par We denote by $C$ any positive constant independent of $r$. We decompose $I^{(\eps)}_2$ into an integral over $\br{\abs{r'-r'_*}\leq \eta\sqrt{\frac{m_i(1-\eps)}{m_j}}r}$ and its complementary, where $0<\eta<1$. Note that from $\abs{r'-r'_*}\leq \eta\sqrt{\frac{m_i(1-\eps)}{m_j}}r$ one can deduce $r' \geq (1-\eta)\sqrt{\frac{m_i(1-\eps)}{m_j}}r$ and thus
$$(r')^2e^{-m_i\frac{(r')^2}{2}} \leq C e^{-m_i\frac{(r')^2}{4}} \leq e^{-\frac{m_i^2(1-\eps)}{m_j}(1-\eta)^2\frac{r^2}{4}}.$$
We get
\begin{equation*}
\begin{split}
I_2^{(\eps)} \leq& \frac{C}{r}\pa{\int_0^{+\infty}\frac{r'_*}{(1+m_j(r'_*)^2)^{k/2}}dr'_*}\pa{\int_{0}^{\eta\sqrt{\frac{m_i(1-\eps)}{m_j}}r}\frac{dr'}{(r')^{1-\gamma}}}e^{-\frac{m_i^2(1-\eps)}{m_j}(1-\eta)^2\frac{r^2}{4}}
\\&+ \frac{m_i^{\frac{1+\gamma}{2}}m_j^{\frac{1-\gamma}{2}}C_0}{(2\pi)^{3/2}(1-\eps)^{\frac{1-\gamma}{2}}\eta^{1-\gamma}r^{2-\gamma}}\pa{\int_{\sqrt{\frac{m_i(1-\eps)}{m_j}}r}^{+\infty}\frac{r'_*}{(1+m_j(r'_*)^2)^{\frac{k}{2}}}dr'_*}
\\&\quad\quad\times \pa{\int_0^{+\infty}(r')^2e^{-m_i\frac{(r')^2}{2}}dr'}
\\\leq& C r^{\gamma-1} e^{-\frac{m_i^2(1-\eps)}{m_j}(1-\eta)^2\frac{r^2}{4}} + \frac{C_0 (1+m_i(1-\eps)r^2)^{\frac{(1-\gamma)}{2}}}{4\pi m_i^{1-\frac{\gamma}{2}}m_j^{\frac{1+\gamma}{2}}(1-\eps)^{\frac{1-\gamma}{2}}\eta^{1-\gamma}r^{2-\gamma}}
\\&\quad\quad\quad\quad\quad\quad\quad\quad\quad\quad\times \pa{\int_{\sqrt{m_i(1-\eps)}r}^{+\infty}\frac{u}{(1+u^2)^{\frac{k+1-\gamma}{2}}}du}
\\=& C e^{-\frac{m_i^2(1-\eps)}{m_j}(1-\eta)^2\frac{r^2}{4}} + \frac{4\pi}{k-1-\gamma}\frac{(m_i+m_j)^2}{m_i^{2-\frac{\gamma}{2}}m_j^{\frac{5+\gamma}{2}}}\frac{C^\Phi_{ij}b^\infty_{ij}\eta^{\gamma-1}}{(1-\eps)^{\frac{1-\gamma}{2}}r^{2-\gamma}(1+m_i(1-\eps)r^2)^{\frac{k}{2}-1}},
\end{split}
\end{equation*}
where we used that $r \geq R \geq 1$ and the definition $\eqref{def C0}$ of $C_0$. As $\nu_i = \sum_k \nu_{ik}$, we use the lower bound $\eqref{nui bound}$ and obtain for $\abs{v}\geq R \geq 1$
\begin{equation*}
\begin{split}
I_2^{(\eps)} \leq&  w_i^{-1}\nu_i \frac{4\pi}{k-1-\gamma}\frac{(m_i+m_j)^2}{m_i^{2-\frac{\gamma}{2}}m_j^{\frac{5+\gamma}{2}}}\frac{\eta^{\gamma-1}\pa{\sum\limits_{1\leq k \leq N}\frac{\sqrt{m_k}}{C^\Phi_{ik}l_{b_{ik}}}}C^\Phi_{ij}b_{ij}^\infty}{(1-\eps)^{\frac{1-\gamma}{2}}}\frac{(1+m_ir^2)^{\frac{k}{2}}}{r^{2}(1+m_i(1-\eps)r^2)^{\frac{k}{2}-1}}
\\&+w_i^{-1}\nu_i \cro{ C (1+r^2)^{\frac{k}{2}-\gamma} e^{-\frac{m_i(1-\eps)}{m_j}\frac{r^2}{4}}}.
\end{split}
\end{equation*}
This indicates that for all $\eps$ in $(0,1)$,
\begin{equation}\label{bound I2}
\lim\limits_{R\to +\infty}\norm{I_2^{(\eps)}(v)\mathbf{1}_{\abs{v}\geq R}}_{L^\infty_v(w_i\nu_i^{-1})} = \frac{4\pi}{k-1-\gamma}\frac{(m_i+m_j)^2}{m_i^{2-\frac{\gamma}{2}}m_j^{\frac{5+\gamma}{2}}}C^\Phi_{ij}b_{ij}^\infty\pa{\sum\limits_{1\leq k \leq N}\frac{\sqrt{m_k}}{C^\Phi_{ik}l_{b_{ik}}}}\frac{\eta^{\gamma-1}}{(1-\eps)^{\frac{1-\gamma}{2}}}.
\end{equation}
From $\eqref{bound I1}$ and $\eqref{bound I2}$, which holds for any $\eps$ and $\eta$ in $(0,1)$, we can bound $\eqref{Bidelta general}$ as the stated in by the lemma for $w_i$ polynomial. This concludes the proof.
\end{proof}
\bigskip

\section{Exponential decay of solutions with bounded initial data}\label{sec:exponential decay}

Let $w_i=e^{\kappa_1(\sqrt{m_i}\abs{v})^{\kappa_2}}$ with $\kappa_1 >0$ and $\kappa_2$ in $(0,2)$ or $w=\langle \sqrt{m_i}v \rangle^k$ with $k> k_0$, defined by $\eqref{k0}$. Let $\beta >3/2$.
\par As described in the introduction, we look for a solution $\mathbf{f}(t,x,v)$ in $L^\infty_{x,v}\pa{\mathbf{w}}$ of the Boltzmann equation
\begin{equation}\label{perturbedmultiBE expodecay}
\left\{\begin{array}{l}\disp{\partial_t \mathbf{f}+ v\cdot\nabla_x \mathbf{f} = \mathbf{L}\pa{\mathbf{f}} +\mathbf{Q}\pa{\mathbf{f}}}\vspace{2mm}\\\vspace{2mm} \disp{\mathbf{f}(0,x,v) = \mathbf{f_0}(x,v)}\end{array}\right.
\end{equation}
in the form of $\mathbf{f}=\mathbf{f_1}+\mathbf{f_2}$. We look for $\mathbf{f_1}$ in $L^\infty_{x,v}\pa{\mathbf{w}}$ and $f_2$ in $L^\infty_{x,v}\pa{\langle v \rangle^\beta\boldsymbol\mu^{-1/2}}\subset L^\infty_{x,v}\pa{\mathbf{w}}$ (due to the weights $\mathbf{w}$ considered here) and $(\mathbf{f_1},\mathbf{f_2})$ satisfying the following system of equations
\begin{eqnarray}
&&\partial_t \mathbf{f_1} +v\cdot\nabla_x\mathbf{f_1} = -\boldsymbol\nu(v)\mathbf{f_1} + \mathbf{B}\pa{\mathbf{f_1}}+ \mathbf{Q}(\mathbf{f_1}+\mathbf{f_2}) \quad\mbox{and}\quad \mathbf{f_1}(0,x,v)=\mathbf{f_0}(x,v),\label{f1}
\\&&\partial_t \mathbf{f_2}+v\cdot\nabla_x\mathbf{f_2} = \mathbf{L}(\mathbf{f_2}) + \mathbf{A}(\mathbf{f_1}) \quad\mbox{and}\quad \mathbf{f_2}(0,x,v)=0\label{f2}.
\end{eqnarray}
The operators $\mathbf{A}$ and $\mathbf{B}$ have been defined in $\eqref{decomposition A B}$ with $\delta$ as described in Remark \ref{rem:CB<1}.

\bigskip
\cite[Theorem 2.2]{BriDau} shows that for some $n$ in $(2,3)$ there exists $\eta >0$ such that if $\norm{\mathbf{f_0}}_{L^1_vL^\infty_x\pa{\langle v \rangle^n}} \leq \eta$ then there exists a unique solution $\mathbf{f}$ to the multi-species Boltzmann equation $\eqref{perturbedmultiBE expodecay}$ in $L^1_vL^\infty_x\pa{\langle v \rangle^n}$ with $\mathbf{f_0}$ as initial datum and satisfying the conservation of mass, momentum and energy. Our choice of weight $w$ implies
\begin{equation}\label{control norms}
\norm{\mathbf{f_0}}_{L^1_vL^\infty_x\pa{\langle v \rangle^n}} \leq \pa{\int_{\R^3}\frac{(1+\abs{v}^2)^{n/2}}{w(v)}\:dv} \norm{\mathbf{f_0}}_{L^\infty_{x,v}\pa{\mathbf{w}}}
\end{equation}
and the integral is finite because either $w$ is exponential or $w$ is polynomial of degree $k>k_0\geq 6$ by $\eqref{k0}$ and $n$ belongs to $(2,3)$. Thus $L^\infty_{x,v}\pa{\mathbf{w}} \subset L^1_vL^\infty_x\pa{\langle v \rangle^n}$. Therefore we can use the existence and uniqueness result of \cite[Theorem 2.2]{BriDau} to obtain a unique solution $\mathbf{f}$ to $\eqref{perturbedmultiBE expodecay}$ in $L^1_vL^\infty_x\pa{\langle v \rangle^n}$ if $\norm{\mathbf{f_0}}_{L^\infty_{x,v}\pa{w}}$ is sufficiently small.
\par Moreover, \cite[Section 6]{BriDau} showed existence and uniqueness of $(\mathbf{f_1},\mathbf{f_2})$ such that $(\mathbf{f_1} + \mathbf{f_2})$ satisfies the conservation of mass, momentum and energy. More precisely, \cite[Propositions 6.6 and 6.7]{BriDau} and the control $\eqref{control norms}$ shows that if $\norm{\mathbf{f_0}}_{L^\infty_{x,v}\pa{w}}$ is sufficiently small then there exist a unique $\mathbf{f_1}$ in  $L^1_vL^\infty_x\pa{\langle v \rangle^n}$ solution to $\eqref{f1}$ and a unique solution $\mathbf{f_2}$ of $\eqref{f2}$ in $L^\infty_{x,v}\pa{\langle v \rangle^\beta\boldsymbol\mu^{-1/2}}$ and satisfies the following exponential decay
\begin{equation}\label{expodecay f2}
\exists C_2,\:\lambda_2 >0,\:\forall t\geq 0, \quad \norm{\mathbf{f_2}(t)}_{L^\infty_{x,v}\pa{\langle v \rangle^\beta\boldsymbol\mu^{-1/2}}} \leq C_2e^{-\lambda_2 t}\norm{\mathbf{f_0}}_{L^1_vL^\infty_x\pa{\langle v \rangle^n}}.
\end{equation}
The constants $C_2$ and $\lambda_2$ are constructive.

\bigskip
As a conclusion, the proof of Theorem \ref{theo:main} will follow directly if we can prove that $\mathbf{f_1}$ is indeed in $L^\infty_{x,v}(w)$ and decays exponentially. Thus, it is a consequence of the following proposition.

\bigskip
\begin{prop}\label{prop:expo decay f1}
Let $w_i=e^{\kappa_1(\sqrt{m_i}\abs{v})^{\kappa_2}}$ with $\kappa_1 >0$ and $\kappa_2$ in $(0,2)$ or $w=\langle \sqrt{m_i}v \rangle^k$ with $k> k_0$. Let $\mathbf{f_0}$ be in $L^\infty_{x,v}\pa{\mathbf{w}}$. There exist $\eta_1$, $\lambda_1 >0$ such that if 
$$\norm{\mathbf{f_0}}_{L^\infty_{x,v}\pa{\mathbf{w}}}\leq \eta_1$$
then the function $\mathbf{f_1}$ solution to $\eqref{f1}$ is in $L^\infty_{x,v}\pa{w}$ and satisfies
$$\forall t\geq 0, \quad \norm{\mathbf{f_1}(t)}_{L^\infty_{x,v}\pa{\mathbf{w}}} \leq e^{-\lambda_1 t}\norm{\mathbf{f_0}}_{L^\infty_{x,v}\pa{\mathbf{w}}}.$$
The constants $\eta_1$ and $\lambda_1$ are constructive and only depends on $N$, $\mathbf{w}$ and the collision kernel.
\end{prop}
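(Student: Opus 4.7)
The plan is to integrate equation $\eqref{f1}$ along the characteristics of free transport with the absorbing multiplicative term $-\boldsymbol\nu(v)$, and then to close a weighted $L^\infty$ estimate by a bootstrap argument exploiting the two key facts that $C_B<1$ (Remark \ref{rem:CB<1}) and that $\mathbf{f_2}$ decays exponentially in time by $\eqref{expodecay f2}$. Componentwise, Duhamel's formula yields
$$f_{1,i}(t,x,v)=e^{-\nu_i(v)t}f_{0,i}(x-vt,v)+\int_0^t e^{-\nu_i(v)(t-s)}\bigl[B_i(\mathbf{f_1})+Q_i(\mathbf{f_1}+\mathbf{f_2})\bigr](s,x-v(t-s),v)\,ds.$$

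Multiplying by $w_i(v)$ and taking absolute values, Lemma \ref{lem:control B} controls the $\mathbf{B}$ contribution pointwise by $C_B\,\nu_i(v)\,\norm{\mathbf{f_1}(s)}_{L^\infty_{x,v}(\mathbf{w})}$ with $C_B<1$, while Lemma \ref{lem:control Q} controls the quadratic one by $C_Q\,\nu_i(v)^{1-c(\mathbf{w})}\norm{\mathbf{f_1}(s)+\mathbf{f_2}(s)}_{L^\infty_{x,v}(\mathbf{w})}^2$. The $\mathbf{w}$-norm of $\mathbf{f_2}$ is in turn majorised by its $L^\infty_{x,v}(\langle v\rangle^\beta\boldsymbol\mu^{-1/2})$-norm via the continuous embedding $L^\infty_{x,v}(\langle v\rangle^\beta\boldsymbol\mu^{-1/2})\hookrightarrow L^\infty_{x,v}(\mathbf{w})$ (since $\mathbf{w}$ grows at most stretched-exponentially with $\kappa_2<2$, or polynomially), so $\eqref{expodecay f2}$ furnishes an exponential decay of $\mathbf{f_2}$ in $L^\infty_{x,v}(\mathbf{w})$ with the rate $\lambda_2>0$.

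Setting $\nu_0:=\min_i\inf_v\nu_i(v)>0$ by $\eqref{nui bound}$, I fix $0<\lambda_1<\min\br{\nu_0/2,\lambda_2}$ and introduce
$$\Phi(T):=\sup_{0\leq t\leq T}e^{\lambda_1 t}\norm{\mathbf{f_1}(t)}_{L^\infty_{x,v}(\mathbf{w})}.$$
Plugging the pointwise bounds into Duhamel and carrying out the elementary time integrals (for instance $\int_0^t\nu_i e^{-\nu_i(t-s)-\lambda_1 s}\,ds=\nu_i(\nu_i-\lambda_1)^{-1}(e^{-\lambda_1 t}-e^{-\nu_i t})$, uniformly bounded in $v$ by $\nu_0/(\nu_0-\lambda_1)$, and similarly for the quadratic piece), I obtain an inequality of the form
$$\Phi(T)\leq \norm{\mathbf{f_0}}_{L^\infty_{x,v}(\mathbf{w})}+\tilde{C}_B\,\Phi(T)+C'_Q\bigl(\Phi(T)^2+\norm{\mathbf{f_0}}_{L^\infty_{x,v}(\mathbf{w})}^2\bigr),$$
with $\tilde{C}_B:=C_B\,\nu_0/(\nu_0-\lambda_1)$ strictly less than $1$ for $\lambda_1$ small enough (possible because $C_B<1$). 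Rearranging produces a quadratic inequality for $\Phi(T)$ which, together with the continuity of $T\mapsto\Phi(T)$ and the initial value $\Phi(0)=\norm{\mathbf{f_0}}_{L^\infty_{x,v}(\mathbf{w})}\leq\eta_1$, closes by the usual continuation argument into a uniform bound $\Phi(T)\leq\norm{\mathbf{f_0}}_{L^\infty_{x,v}(\mathbf{w})}$ for $\eta_1$ small enough (possibly after mildly shrinking $\lambda_1$ to absorb the residual multiplicative constant $(1-\tilde{C}_B)^{-1}$ into the exponential).

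The main obstacle, and the heart of the strategy, is precisely the strict inequality $C_B<1$ in $L^\infty_{x,v}(\mathbf{w})$, which for polynomial weights is exactly the content of the threshold $k_0$ defined in $\eqref{k0}$ (cf.\ Remark \ref{rem:CB<1}): without it the linear term $\mathbf{B}$ could not be absorbed by the damping $-\boldsymbol\nu$, and no bootstrap would close. The factor $\mathbf{f_2}$ and the bilinear term $\mathbf{Q}$ then only contribute as small, exponentially decaying perturbations thanks to the smallness of $\mathbf{f_0}$ and to $\eqref{expodecay f2}$.
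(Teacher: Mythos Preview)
Your proposal is correct and follows essentially the same strategy as the paper: write $\mathbf{f_1}$ in Duhamel form along free transport with absorption $-\boldsymbol\nu$, use Lemma \ref{lem:control B} (with $C_B<1$) and Lemma \ref{lem:control Q} to bound the $\mathbf{B}$ and $\mathbf{Q}$ contributions, feed in the exponential decay $\eqref{expodecay f2}$ of $\mathbf{f_2}$, and close by a continuation argument on $\sup_{s\leq t}e^{\lambda_1 s}\norm{\mathbf{f_1}(s)}_{L^\infty_{x,v}(\mathbf{w})}$. The only cosmetic difference is that the paper handles the time integral via the splitting $-\nu_i(t-s)\leq -\eps\nu_0 t -\nu_i(1-\eps)(t-s)+\eps\nu_0 s$ (yielding the factor $C_B/(1-\eps)$), whereas you compute $\int_0^t\nu_i e^{-(\nu_i-\lambda_1)(t-s)}\,ds\leq \nu_0/(\nu_0-\lambda_1)$ directly; both produce a constant strictly less than $1$ for $\eps$ (resp.\ $\lambda_1$) small enough, which is the crux.
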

\bigskip

We adapt a method developed in \cite{Bri6} and improved in \cite{BriGuo} that relies on a Duhamel formula along the characteristic trajectories for $\mathbf{f_1}$.

\bigskip
\begin{proof}[Proof of Proposition \ref{prop:expo decay f1}]
Since $\mathbf{f_1}$ is solution to $\eqref{f1}$ we can write each of its components under their implicit Duhamel form along the characteristics. As the space variable lives on the torus, these trajectories are straight lines with velocity $v$. We thus have the following equality for almost every $(x,v)$ in $\T^3\times\R^3$ and for every $i$ in $\br{1,\dots,N}$.
\begin{equation}\label{f1 Duhamel}
\begin{split}
\forall t\geq 0, \: f_{1i}(t,x,v) =& e^{-\nu_i(v)t}f_{0i}(x-tv,v) + \int_0^t e^{-\nu_i(v)(t-s)}B_i\pa{\mathbf{f_1}}(s,x-(t-s)v,v)\:ds
\\& +\int_0^t e^{-\nu_i(v)(t-s)} Q_i(\mathbf{f_1}+\mathbf{f_2})(s,x-(t-s)v,v)\:ds.
\end{split}
\end{equation}
We would like to bound $w_i(v)f_{1i}(t,x,v)$ in $x$ and $v$ and we therefore bound each of the three terms on the right-hand side.

\bigskip
Since $\nu_i(v) \geq \nu_0 >0$ for all $i$ and $v$ we have the following bound
\begin{equation}\label{1st bound}
\abs{w_i(v)e^{-\nu_i(v)t}f_{0i}(x-tv,v)} \leq e^{-\nu_0 t}\norm{\mathbf{f_0}}_{L^\infty_{x,v}\pa{\mathbf{w}}}.
\end{equation}
\par Multiplying and dividing by $\nu_i(v)$ inside the first integral term on the right-hand side of $\eqref{f1 Duhamel}$ yields
\begin{equation*}
\begin{split}
&\abs{w_i(v)\int_0^t e^{-\nu_i(v)(t-s)}B_i\pa{\mathbf{f_1}}(s,x-(t-s)v,v)\:ds}
\\&\quad\quad\quad\quad\quad\leq \int_0^t \nu_i(v)e^{-\nu_i(v)(t-s)} \norm{B_i\pa{\mathbf{f_1}}(s)}_{L^\infty_{x,v}\pa{w_i\nu_i^{-1}}}\:ds
\\&\quad\quad\quad\quad\quad\leq C_{Bi}\int_0^t \nu_i(v)e^{-\nu_i(v)(t-s)}\norm{\mathbf{f_1}(s)}_{L^\infty_{x,v}\pa{\mathbf{w}}}\:ds,
\end{split}
\end{equation*}
where we used Lemma \ref{lem:control B} to estimate the norm of $B_i\pa{\mathbf{f_1}}$ with $\sum_{i=1}^N C_{Bi} = C_B$. Since for any $\eps$ in $(0,1)$
$$\forall 0\leq s \leq t, \quad -\nu_i(v)(t-s) \leq -\eps\nu_0t -\nu_i(v)(1-\eps)(t-s)+\eps \nu_0 s$$
we can further bound
\begin{equation*}
\begin{split}
&\abs{w_i(v)\int_0^t e^{-\nu_i(v)(t-s)}B_i\pa{\mathbf{f_1}}(s,x-(t-s)v,v)\:ds}
\\&\quad\quad\quad\quad\quad\leq C_{Bi}e^{-\eps\nu_0 t}\pa{\int_0^t \nu_i(v)e^{-\nu_i(v)(1-\eps)(t-s)}\:ds}\sup\limits_{0\leq s\leq t}\cro{e^{\eps\nu_0s}\norm{\mathbf{f_1}(s)}_{L^\infty_{x,v}\pa{\mathbf{w}}}}.
\end{split}
\end{equation*}
We thus conclude for all $\eps$ in $(0,1)$
\begin{equation}\label{2nd bound}
\begin{split}
&\abs{w_i(v)\int_0^t e^{-\nu_i(v)(t-s)}B_i\pa{\mathbf{f_1}}(s,x-(t-s)v,v)\:ds}
\\&\quad\quad\quad\quad\quad \leq \frac{C_{Bi}}{1-\eps}e^{-\eps\nu_0 t} \sup\limits_{0\leq s\leq t}\cro{e^{\eps\nu_0s}\norm{\mathbf{f_1}(s)}_{L^\infty_{x,v}\pa{\mathbf{w}}}}.
\end{split}
\end{equation}
\par Finally, the last integral on the right-hand side of $\eqref{f1 Duhamel}$ is dealt with exactly the same way but using Lemma \ref{lem:control Q} to control the operator $Q_i$. This yields
\begin{equation}\label{3rd bound}
\begin{split}
&\abs{w_i(v)\int_0^t e^{-\nu_i(v)(t-s)}Q_i\pa{\mathbf{f_1}+\mathbf{f_2}}(s,x-(t-s)v,v)\:ds} 
\\&\quad\quad\quad\leq \frac{C_Q}{1-\eps}e^{-\nu_0 t}\pa{\norm{\mathbf{f_1}}_{L^\infty_{[0,t],x,v}(\mathbf{w})}+2\norm{\mathbf{f_2}}_{L^\infty_{[0,t],x,v}(\mathbf{w})}} \sup\limits_{0\leq s\leq t}\cro{e^{\eps\nu_0s}\norm{\mathbf{f_1}(s)}_{L^\infty_{x,v}\pa{\mathbf{w}}}}
\\&\quad\quad\quad\quad + \frac{C_Q}{1-\eps}\sup\limits_{0\leq s\leq t}\cro{e^{\eps\nu_0s}\norm{\mathbf{f_2}(s)}^2_{L^\infty_{x,v}\pa{\mathbf{w}}}}.
\end{split}
\end{equation}

\bigskip
Gathering $\eqref{1st bound}$, $\eqref{2nd bound}$ and $\eqref{3rd bound}$ into $\eqref{f1 Duhamel}$ we see that for all $\eps$ in $(0,1)$,
\begin{equation}\label{finalbound}
\begin{split}
e^{\eps\nu_0 t} \norm{\mathbf{f_1}(t)}_{L^\infty_{x,v}(\mathbf{w})} \leq& \norm{\mathbf{f_0}}_{L^\infty_{x,v}(\mathbf{w})} + \frac{C_Q}{1-\eps}\sup\limits_{0\leq s\leq t}\cro{e^{\eps\nu_0s}\norm{\mathbf{f_2}(s)}^2_{L^\infty_{x,v}\pa{\mathbf{w}}}}
\\&+\sup\limits_{0\leq s\leq t}\cro{e^{\eps\nu_0s}\norm{\mathbf{f_1}(s)}_{L^\infty_{x,v}\pa{\mathbf{w}}}}
\\&\quad\quad\times\cro{\frac{C_B}{1-\eps}+\frac{C_Q}{1-\eps}e^{-\nu_0 t}\pa{\norm{\mathbf{f_1}}_{L^\infty_{[0,t],x,v}(\mathbf{w})}+2\norm{\mathbf{f_2}}_{L^\infty_{[0,t],x,v}(\mathbf{w})}}}.
\end{split}
\end{equation}
We can now conclude by choosing $\eps$ and $\mathbf{f_0}$ sufficiently small. Indeed, first $C_B <1$ (see Remark \ref{rem:CB<1} so we can choose $C_B< 1-\eps$ and thus $\alpha:= 1- C_B/(1-\eps)>0$. Second, the exponential decay of $\mathbf{f_2}$ given by $\eqref{expodecay f2}$ also implies that 
$$e^{\eps\nu_0s}\norm{\mathbf{f_2}(s)}_{L^\infty_{x,v}(\mathbf{w})} \leq e^{\eps\nu_0s}\norm{\mathbf{f_2}(s)}_{L^\infty_{x,v}(\langle v\rangle^\beta\boldsymbol\mu)}\leq C_2\norm{\mathbf{f_0}}_{L^\infty_{x,v}(\mathbf{w})}$$
as long as $\eps \nu_0 \leq \lambda_2$. It thus follows
\begin{equation*}
\begin{split}
\sup\limits_{s\in [0,t]}\cro{e^{\eps\nu_0 s} \norm{\mathbf{f_1}(s)}_{L^\infty_{x,v}(\mathbf{w})}}\leq& C_0 \norm{\mathbf{f_0}}_{L^\infty_{x,v}(\mathbf{w})} + \frac{C_Q}{\alpha(1-\eps)}\pa{\sup\limits_{s\in [0,t]}\cro{e^{\eps\nu_0 s} \norm{\mathbf{f_1}(s)}_{L^\infty_{x,v}(\mathbf{w})}}}^2.
\end{split}
\end{equation*}
Define 
$$T_{\max} = \sup\br{t\geq 0, \:\sup\limits_{s\in [0,t]}\cro{e^{\eps\nu_0 s} \norm{\mathbf{f_1}(s)}_{L^\infty_{x,v}(\mathbf{w})}} < 4C_0 \norm{\mathbf{f_0}}_{L^\infty_{x,v}(\mathbf{w})}},$$
which is well-defined because $t=0$ belongs to the set, and it follows than for any $t$ in $[0,T_{\max}]$,
$$\sup\limits_{s\in [0,t]}\cro{e^{\eps\nu_0 s} \norm{\mathbf{f_1}(s)}_{L^\infty_{x,v}(\mathbf{w})}}\leq C_0\norm{\mathbf{f_0}}_{L^\infty_{x,v}(\mathbf{w})} + 16C_0^2\frac{C_Q}{\alpha(1-\eps)}\norm{\mathbf{f_0}}_{L^\infty_{x,v}(\mathbf{w})}^2$$
which is strictly smaller than $2C_0\norm{\mathbf{f_0}}_{L^\infty_{x,v}(\mathbf{w})}$ if $\norm{\mathbf{f_0}}_{L^\infty_{x,v}(\mathbf{w})}$ is sufficiently small. If $T_{\max}$ is finite then it follows $\sup\limits_{s\in [0,t]}\cro{e^{\eps\nu_0 s} \norm{\mathbf{f_1}(s)}_{L^\infty_{x,v}(\mathbf{w})}} \leq 2C_0\norm{\mathbf{f_0}}_{L^\infty_{x,v}(\mathbf{w})}$ which contradicts the definition of $T_{\max}$. Therefore, $T_{\max}= +\infty$ and
$$\sup\limits_{t\geq 0}\cro{e^{\eps\nu_0 t} \norm{\mathbf{f_1}(t)}_{L^\infty_{x,v}(\mathbf{w})}} \leq 2C_0\norm{\mathbf{f_0}}_{L^\infty_{x,v}(\mathbf{w})},$$
which concludes the proof.
\end{proof}
\bigskip

%% APPENDIX %%%%

\appendix

\section{Proof of Lemma \ref{lem:radiallysym}}\label{appendix:proof lemma radially symmetric}

We will denote by $\delta_d$ the dirac distribution in dimension $d$. First of all, one has the following rewriting (see \cite[equation $(3.6)$]{BobGamPan} or \cite[Lemma 1]{BobRja} or \cite[equation $(4.20)$]{GMM})
\begin{equation}\label{changeofvariable}
\forall \varphi \in C(\R^3),\:\forall w \in \R^3, \quad \int_{\S^2}\varphi(\abs{w}\sigma - w)\:d\sigma = \frac{1}{\abs{w}}\int_{\R^3}\varphi(y)\delta_1\pa{\langle y,w\rangle + \frac{1}{2}\abs{y}^2}\:dy.
\end{equation}
Then, by definition we have
$$v' = v+ m_j(\abs{w}\sigma - w) \quad\mbox{and}\quad v_*' = v_* - m_i(\abs{w}\sigma - w)$$
with $w = (v-v_*)/(m_i+m_j)$. We can therefore use first $\abs{v-v_*} = \abs{v'-v'_*}$ and second $\eqref{changeofvariable}$ to our $Q^+_{ij}$ operator which implies
\begin{eqnarray}
&&\abs{Q^+_{ij}(F,G)(v)} \nonumber
\\ &&\quad\leq b_{ij}^\infty \int_{\R^3} dv_* \int_{\S^2} \abs{(v + m_j(\abs{w}\sigma - w))-(v_* - m_i(\abs{w}\sigma - w))}^\gamma \nonumber
\\&&\quad\quad\quad\quad\quad\quad\quad\quad\quad\quad\quad \abs{F}\pa{v+ m_j(\abs{w}\sigma - w)}\abs{G}\pa{v_* - m_i(\abs{w}\sigma - w)}d\sigma\nonumber
\\&&\quad\leq b_{ij}^\infty(m_i+m_j)\int_{\R^3}\int_{\R^3}\frac{1}{\abs{(v+m_jy)-(v_*-m_iy)}^{1-\gamma}}F(v+m_j y) G(v_* -m_iy)\nonumber
\\&&\quad\quad\quad\quad\quad\quad\quad\quad\quad\quad\quad\times\delta_1\pa{\langle y,w\rangle + \frac{1}{2}\abs{y}^2}dydv_* \nonumber
\\&&\quad\leq b_{ij}^\infty(m_i+m_j)\int_{\R^3}\int_{\R^3}\int_{\R^3}\frac{\abs{F}(v+m_j y) \abs{G}(v_* -m_i z)}{\abs{(v+m_jy)-(v_*-m_iz)}^{1-\gamma}}\nonumber
\\&&\quad\quad\quad\quad\quad\quad\quad\quad\quad\quad\quad\times\delta_1\pa{\langle y,w\rangle + \frac{1}{2}\abs{y}^2}\delta_3\pa{y-z}\:dydzdv_*.\label{Q+ij first bound}
\end{eqnarray}

\bigskip
Defining $v' = v+ m_j y$ and $v'_* = v_*-m_i z$ we compute
$$y-z = \frac{-1}{m_im_j}\pa{m_i v + m_j v_* - m_i v' - m_j v'_*}$$
and also that for $y=z$
$$m_i \abs{v'}^2 + m_j \abs{v'_*}^2 - m_i \abs{v}^2 - m_j \abs{v_*}^2 = 2m_im_j(m_i+m_j)\cro{\langle y,w\rangle + \frac{1}{2}\abs{y}^2}.$$
Denoting
\begin{eqnarray*}
C_m &=& \br{(v,v_*,v',v'_*)\in \pa{\R^3}^4, \quad m_i v + m_j v_* = m_i v' + m_j v'_*}
\\ C_e &=& \br{(v,v_*,v',v'_*)\in \pa{\R^3}^4, \quad m_i \abs{v'}^2 + m_j \abs{v'_*}^2= m_i \abs{v}^2 m_j \abs{v_*}^2},
\end{eqnarray*}
and using the property $\delta_d(ax) = \abs{a}^{-d}\delta_d(x)$ the above implies
$$\delta_1\pa{\langle y,w\rangle + \frac{1}{2}\abs{y}^2}\delta_3\pa{y-z} = 2m_i^4m_j^4(m_i+m_j)\delta_{C_m}\delta_{C_e}$$
where $\delta_A$ is now the distribution on the set $A$. Using the change of variable $(y,z) \to (v',v'_*)$ in $\eqref{Q+ij first bound}$ therefore gives
\begin{equation*}
\abs{Q^+_{ij}(F,G)(v)} \leq 2b_{ij}^\infty m_im_j(m_i+m_j)^2\int_{\R^3}\int_{\R^3}\int_{\R^3}\frac{\abs{F}(v') \abs{G}(v'_*)}{\abs{v'-v'_*}^{1-\gamma}}\delta_{C_m}\delta_{C_e} \:dv_*dv'dv'_*.
\end{equation*}
Since $F$ and $G$ are radially symmetric we infer by using spherical coordinates $v_*=r_*\sigma_*$, $v'=r'\sigma'$ and $v'_* = r'_*\sigma'_*$
\begin{equation}\label{Q+ij second bound}
\begin{split}
\abs{Q^+_{ij}(F,G)(v)} \leq 2b_{ij}^\infty m_im_j(m_i+m_j)^2\int_0^\infty\int_0^\infty\int_0^\infty & \frac{\abs{F}(r') \abs{G}(r'_*)}{\abs{r'-r'_*}^{1-\gamma}}
\\&\delta_{C_e}K(r,r_*,r',r'_*)dr_*dr'dr'_*
\end{split}
\end{equation}
where
$$K(r,r_*,r',r'_*) = (r_*)^2(r')^2(r'_*)^2\int_{\S^2}\int_{\S^2}\int_{\S^2} \delta_{C_m} \:d\sigma_*d\sigma'd\sigma'_*.$$
Note that we used the triangular inequality $\abs{v'-v'_*} \geq \abs{r'-r'_*}$.

\bigskip
From \cite[Step 3 proof of Lemma 4.6]{GMM} we have that for any $(a_1,a_2,a_3,a_4)$ in $\pa{\R^+}^4$
$$A(a_1,a_2,a_3,a_4) = \int_{\S^2}\int_{\S^2}\int_{\S^2} \delta_{\br{a_1\sigma + a_2\sigma_*= a_3\sigma' + a_4\sigma'_*}}\:d\sigma_*d\sigma'd\sigma'_*$$
does not depend on $\sigma$ in $\S^2$ and $A(a_1,a_2,a_3,a_4)$ is invariant under permutations of the variables $(a_1,a_2,a_3,a_4)$. Moreover, \cite{GMM} also proved that for $a_1 > a_2 >a_3>a_4$ one has
$$A(a_1,a_2,a_3,a_4)= \frac{8\pi^2}{a_1a_2a_3a_4}\pa{a_4+a_3 - \max\br{(a_1-a_2), (a_3-a_4)}}.$$
From the latter equality it follows that
$$A(a_1,a_2,a_3,a_4)\leq \frac{16\pi^2}{a_1a_2a_3a_4}a_4 = \frac{16\pi^2}{a_1a_2a_3a_4}\min\br{a_1,a_2,a_3,a_4},$$
which holds for any $(a_1,a_2,a_3,a_4)$ thanks to the permutation invariance. We therefore obtain that
\begin{eqnarray*}
K(r,r_*,r',r'_*) &=& (r_*)^2(r')^2(r'_*)^2 A(m_ir,m_jr_*,m_ir',m_jr'_*)
\\ &\leq& \frac{16\pi^2}{m_i^2m_j^2}\frac{r_*r'r'_*}{r}\min\br{m_ir,m_jr_*,m_ir',m_jr'_*}
\end{eqnarray*}
which we can plug inside $\eqref{Q+ij second bound}$ to obtain
\begin{equation*}
\begin{split}
\abs{Q^+_{ij}(F,G)(v)}\leq 32\pi^2 b_{ij}^\infty \frac{(m_i+m_j)^2}{m_im_j}\int_0^\infty\int_0^\infty\int_0^\infty&\frac{r_*r'r'_*\min\br{m_ir,m_jr_*,m_ir',m_jr'_*}}{r\abs{r'-r'_*}^{1-\gamma}}
\\&\delta_{C_e} \abs{F}(r') \abs{G}(r'_*)dr_*dr'dr'_*
\end{split}
\end{equation*}
which gives the expected result noticing that
\begin{eqnarray*}
\delta_{C_e}\mathbf{1}_{\br{r_*\geq 0}} &=& \delta\pa{m_j\pa{r_*^2 - \pa{\sqrt{\frac{m_i}{m_j}(r')^2 + (r'_*)^2-\frac{m_i}{m_j}r^2}}^2}}\mathbf{1}_{\br{r_*\geq 0}}
\\&=&\frac{1}{2m_jr_*}\delta\pa{r_* -\sqrt{\frac{m_i}{m_j}(r')^2 + (r'_*)^2-\frac{m_i}{m_j}r^2}}\mathbf{1}_{\br{\frac{m_i}{m_j}(r')^2 + (r'_*)^2-\frac{m_i}{m_j}r^2\geq 0}}.
\end{eqnarray*}

%
% Pour une biblio classe
\bibliographystyle{acm}
\bibliography{bibliography_multiBE}

%%%% Pour une biblio manuelle
%%\newpage
%%\include{bibliography}

% On met les signatures
\bigskip
\signmb

\end{document}